\let\NAT@parse\undefined
  \def\@linkcolor{blue}
  \def\@anchorcolor{red}
  \def\@citecolor{red}
  \def\@filecolor{red}
  \def\@urlcolor{black}
  \def\@menucolor{red}
  \def\@pagecolor{red}
  \edef\x{%
    \edef\noexpand\x{%
      \endgroup
      \noexpand\toks@{%
        \catcode 96=\noexpand\the\catcode`\noexpand\`\relax
        \catcode 61=\noexpand\the\catcode`\noexpand\=\relax
      }%
    }%
    \noexpand\x
  }%
\newtheorem{Theorem}{Theorem}
\newtheorem{Lemma}{Lemma}
\newtheorem{Problem}{Problem}
\newtheorem*{Problem*}{Problem}
\newtheorem{Remark}{Remark}
\newtheorem{Proposition}{Proposition}
\newtheorem{Assumption}{Assumption}
\newtheorem{Definition}{Definition}
\def\BibTeX{{\rm B\kern-.05em{\sc i\kern-.025em b}\kern-.08em
    T\kern-.1667em\lower.7ex\hbox{E}\kern-.125emX}}
\begin{document}


\title{\LARGE{\bf Sampling based Computation of Viability Domain to \\ Prevent Safety Violations by Attackers}}

\author{Kunal~Garg \and
Alvaro~A.~Cardenas \and Ricardo~G.~Sanfelice
\thanks{
The research was sponsored by the Army Research Office and was accomplished under Grant Number W911NF-20-1-0253. The views and conclusions contained in this document are those of the authors and should not be interpreted as representing the official policies, either expressed or implied, of the Army Research Office or the U.S. Government. The U.S. Government is authorized to reproduce and distribute reprints for Government purposes, notwithstanding any copyright notation herein.
}
}
\maketitle
\begin{abstract}
This paper studies the security of cyber-physical systems under attacks. Our goal is to design system parameters, such as a set of initial conditions and input bounds so that it is secure by design. To this end, we propose new sufficient conditions to guarantee the safety of a system under adversarial actuator attacks. Using these conditions, we propose a computationally efficient sampling-based method to verify whether a set is a viability domain for a general class of nonlinear systems. In particular, we devise a method of checking a modified barrier function condition on a finite set of points to assess whether a set can be rendered forward invariant. Then, we propose an iterative algorithm to compute the set of initial conditions and input constraint set to limit what an adversary can do if it compromises the vulnerable inputs. Finally, we utilize a Quadratic Program approach for online control synthesis. 
\end{abstract}


\section{Introduction}
Security has become one of the most critical problems in the field of Cyber-Physical Systems (CPS), as illustrated by several incidents of attacks that happened in the past few years~\cite{lee2014german,oueslati2019comparative}. There are two types of security mechanisms for protecting CPS~\cite{cardenascyber} i) proactive, which considers design choices deployed in the CPS \emph{before} attacks, and ii) reactive, which take effect after an attack is detected.

While reactive methods are less conservative than proactive mechanisms, they heavily rely on fast and accurate attack detection mechanisms. 
Although there is a plethora of work on attack detection for CPS~\cite{choi2018detecting,renganathan2020distributionally},  it is generally possible to design a stealthy attack such that the system behavior remains close to its expected behavior, thus evading attack-detection solutions~\cite{urbina2016limiting}. Intrusion detection systems also produce a large number of false positives, which can lead to a large operational overhead of security analysts dealing with irrelevant alerts~\cite{cardenas2006framework}. On the other hand, a proactive method can be more effective in practice, particularly against stealthy attacks. 
Attacks on a CPS can disrupt the natural operation of the system. 
One of the most desirable system properties is \textit{safety}, i.e., the system does not go out of a safe zone. Safety is an essential requirement, violation of which can result in failure of the system, loss of money, or even loss of human life, particularly when a system is under attack \cite{al2018cyber}.

In most practical problems, safety can be realized as guaranteeing forward-invariance of a safe set. Control barrier function (CBF) based approach \cite{ames2017control} to guarantee forward invariance of the safe region has become very popular in the last few years since a safe control input can be efficiently computed using a Quadratic Program (QP) with CBF condition as the constraint. Most of the prior work on safety using CBFs, e.g., \cite{ames2017control}, assumes that the \textit{viability} domain, i.e., the set of initial conditions from which forward invariance of the safe set can be guaranteed, is known. In practice, it is not an easy task to compute the viability domain for a nonlinear control system. Optimization-based methods, such as Sum-of-Squares (SOS) techniques, have been used in the past to compute this domain (see \cite{wang2018permissive}). However, SOS-based approaches are only applicable to systems whose dynamics is given by polynomial functions, thus limiting their applications. Another method popularly used in the literature for computing the viability domain is Hamilton-Jacobi (HJ) based reachability analysis, see, e.g., \cite{choi2021robust}. However, such an analysis is computationally expensive, particularly for higher dimensional systems. We propose a novel sampling-based method to compute the viability domain for a general class of nonlinear control systems to overcome these limitations.

In this work, we consider a general class of nonlinear systems under actuator attacks and propose a method of computing a set of initial conditions and an input constraint set such that the system remains \textit{secure by design}. In particular, we consider actuator manipulation, where an attacker can assign arbitrary values to the input signals for a subset of the actuators in a given bound. We consider the property of safety with respect to an unsafe set and propose sufficient conditions using sampling of the boundary of a set to verify whether the set is a \textit{viability domain} under attacks. Using these conditions, we propose a computationally tractable algorithm to compute the set of initial conditions and the input constraint set such that the system's safety can be guaranteed under attacks. In effect, our proposed method results in a secure-by-design system that is resilient against actuator attacks. Finally, we leverage these sets in a QP-based approach with provable feasibility for real-time online feedback synthesis. The contributions of the paper are summarized below:
\begin{itemize}
    \item[1)] We present sampling-based sufficient conditions to assess whether a given set can be rendered forward invariant for a general class of nonlinear system. To the best of the authors' knowledge, this is the first work utilizing sampled-data approach for computing a viability domain;
    \item[2)] We present a novel iterative algorithm to compute a viability domain and an input constraint set to guarantee system safety under attacks. Unlike \cite{wang2018permissive,choi2021robust}, the proposed method is applicable for general nonlinear control systems and is scalable with the system dimension.
\end{itemize}

\noindent \textit{Notation}: Throughout the paper, $\mathbb R$ denotes the set of real numbers and $\mathbb R_+$ denotes the set of non-negative real numbers. We use $|x|$ to denote the Euclidean norm of a vector $x\in \mathbb R^n$. We use $\partial S$ to denote the boundary of a closed set $S\subset \mathbb R^n$ and $\textrm{int}(S)$ to denote its interior and $|x|_S = \inf_{y\in S}|x-y|$, to denote the distance of $x\in \mathbb R^n$ from the set $S$. The Lie derivative of a continuously differentiable function $h:\mathbb R^n\rightarrow\mathbb R$ along a vector field $f:\mathbb R^n\rightarrow\mathbb R^m$ at a point $x\in \mathbb R^n$ is denoted as $L_fh(x) \coloneqq \frac{\partial h}{\partial x}(x)f(x)$. 

\section{Problem formulation}\label{sec: math prelim}
We start with defining a model for the system and the attacker considered in this paper. 

\subsection{System model}
Consider a nonlinear control system $\mathcal S$ given as
\begin{align}\label{eq: actual system}
\mathcal S : \begin{cases}\dot x = F(x,u) + d(t,x),\\
x\in \mathcal D, u\in \mathcal U,\end{cases}
\end{align}
where $F:\mathcal D\times\mathcal U\to \mathbb R^n$ is a known function continuous on $:\mathcal D\times\mathcal U$, with $\mathcal D\subset\mathbb R^n$ and $\mathcal U\subset \mathbb R^m$, $d:\mathbb R_+\times\mathbb R^n\rightarrow\mathbb R^n$ is unknown and represents the unmodeled dynamics, $x\in \mathcal D$ is the system state, and $u\in \mathcal U$ is the control input. 
\subsection{Attacker model}
In this paper, we consider attacks on the control input of the system. In particular, we consider an attack where a subset of the components of the control input is compromised. Under such an attack, the system input takes the form:
\begin{align}\label{eq: attack model}
   u = (u_v, u_s),
\end{align}
where $u_v\in \mathcal U_v\subset \mathbb R^{m_v}$ represents the \textit{vulnerable} components of the control input that might be compromised or attacked, and $u_s\in \mathcal U_s\subset \mathbb R^{m_s}$ the \textit{secure} part that cannot be attacked, with $m_v + m_s = m$ and $\mathcal U\coloneqq \mathcal U_v\times \mathcal U_s$.
Under this class of attack, we assume that we know which components of the control input are vulnerable. For example, if the system has four inputs so that $u = \begin{bmatrix}u_1& u_2& u_3& u_4\end{bmatrix}^T$, and $u_1, u_3$ can be attacked, then we assume that this information is known, and $u_v$ is comprised of $u_1$ and $u_3$. We discuss how to address the assumption of which components of the control input are vulnerable in Remark \ref{rem: attack model 1} in Section \ref{sec: num method}. 



Such attack models have been used in prior work, see e.g., \cite{giraldo2020daria}, and can be implemented in practice by designing the dynamic range of the actuator to preserve its bounds. It can also be implemented in software with the help of a reference monitor~\cite{erlingsson2004inlined} between the controller and the actuator that can check if the desired control inputs satisfy the security policy~\cite{giraldo2020daria}. As discussed in \cite{pasqualetti2013attack}, various prototypical attacks, such as \textit{stealth attacks}, \textit{replay attacks}, and \textit{false-data injection attacks} can be captured by the attack model in \eqref{eq: attack model}. In addition to representing a real-world scenario where system actuators have physical limits, constraining the vulnerable control input $u_v$ in the set $\mathcal U_v$ has several advantages:
\begin{itemize}
    \item[1)] It restricts how much an attacker can change the nominal operation of the system \cite{kafash2018constraining}, and can be implemented physically, so an attacker cannot bypass it.
    \item[2)] It can be utilized to design a detection mechanism, e.g., if $u_v\notin \mathcal U_v$, a flag can be raised signifying that the system is under an attack. Schemes that raise a threshold-based flag are commonly used as detection mechanism \cite{renganathan2020distributionally}. 
    \item[3)] The constraint set $\mathcal U_v$ can be designed appropriately such that the system remains \textit{secured} under attacks, as discussed in Section \ref{sec: num method} (see also \cite{giraldo2020daria,kafash2018constraining}). 
\end{itemize}

\begin{figure}[t]
	\centering
	\includegraphics[width=0.8\columnwidth,clip]{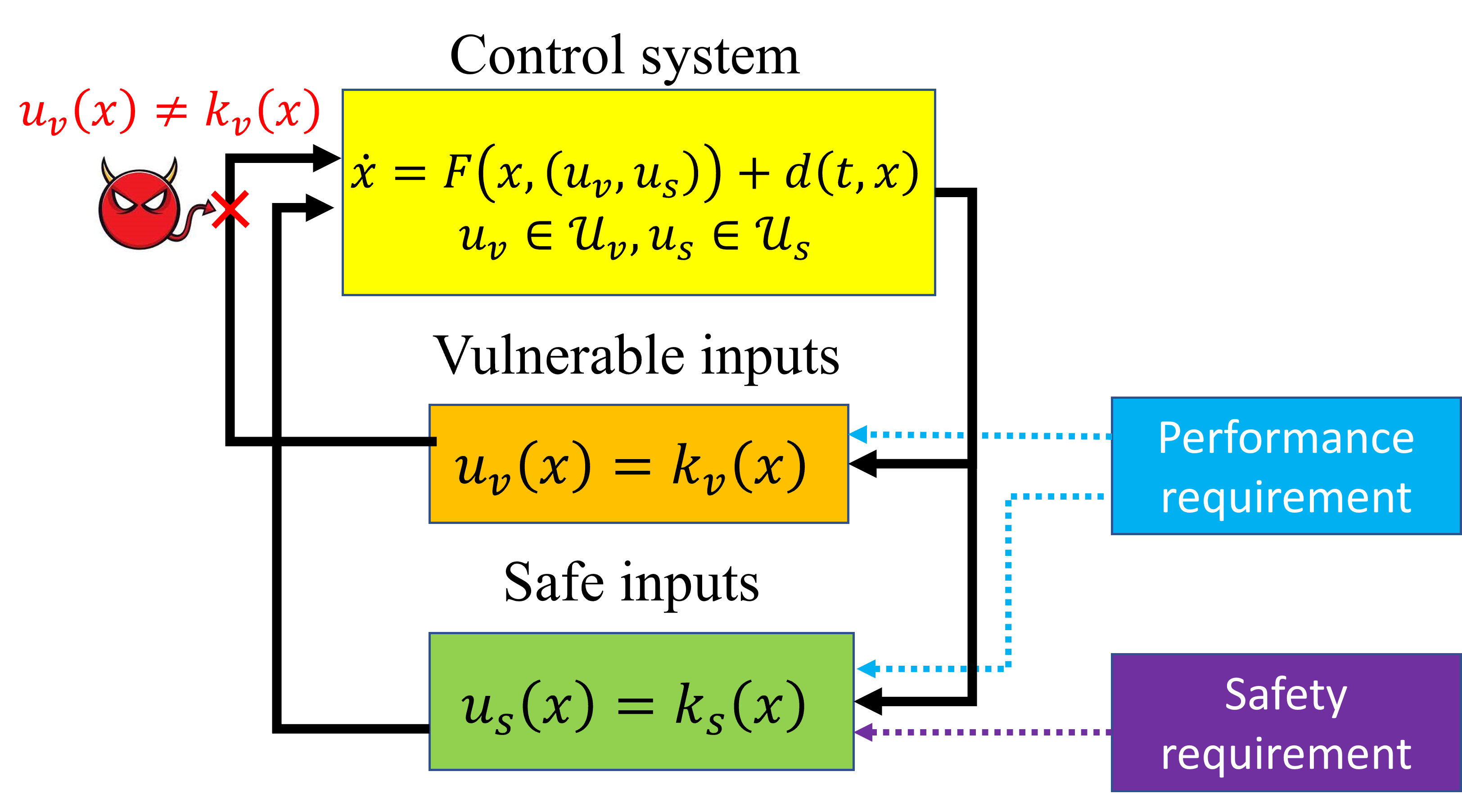}
	\caption{Overview of our approach.}
	\label{fig:overview}
\end{figure}

Now, we present the control design objectives. Consider a non-empty, compact set $S\subset\mathbb R^n$, referred to as safe set, to be rendered forward invariant. We make the following assumption on the unmodeled dynamics $d$ in \eqref{eq: actual system}:
\begin{Assumption}\label{assum: d bound}
There exists $\delta>0$ such that $|d(t, x)|\leq \delta$ for all $t\geq 0$ and $x\in \mathcal D$.
\end{Assumption}
\noindent We consider two properties when designing the control law, an \textit{essential property (safety)}, imposed while designing the secure input $u_s$, and a \textit{desirable property (performance)}, imposed while designing both $u_s$ and $u_v$ (see Figure \ref{fig:overview}). The problem we study in this paper is as follows. 

\begin{Problem}\label{Problem 1}
Given the system in \eqref{eq: actual system} with unmodeled dynamics $d$ that satisfies Assumption \ref{assum: d bound}, a set $S$ and the attack model in \eqref{eq: attack model}, design a feedback law $k_s:\mathbb R^n\rightarrow \mathcal U_s$, and find a set of initial conditions $X_0\subset S$ and the input constraint set $\tilde{\mathcal U}_v\subset\mathcal U_v$, such that for all $x(0)\in X_0$ and $u_v:\mathbb R_+\rightarrow  \tilde{\mathcal U}_v$, the closed-loop trajectories $x:\mathbb R_+\rightarrow\mathbb R^n$ of \eqref{eq: actual system} resulting from using $u_s = k_s(x)$ satisfy  $x(t) \in S$ for all $t\geq 0$.
\end{Problem}

In plain words, we consider the problem of designing a feedback law $u_s$ and compute a set of initial conditions $X_0$ and input constraint set $\tilde{\mathcal U}_v$, such that even under an attack as per the attack model \eqref{eq: attack model}, the system trajectories do not leave the safe set $S$, i.e., the system is secure by design.
In this work, we assume that the safe set is given as $S \coloneqq \{x\; |\; B(x)\leq 0\}$ where $B:\mathbb R^n\rightarrow\mathbb R$ is a sufficiently smooth user-defined function.

\subsection{Outline of approach}
Given a control system \eqref{eq: actual system}, and an attack model \eqref{eq: attack model}, we first identify a safe set $S\subset \mathbb R^n$ and the vulnerable input $u_v$. Then, our approach to solving Problem \ref{Problem 1} involves the following steps (see Figure \ref{fig:overview 2}):
\begin{itemize}
    \item[1)] \textit{Establish the existence of $X_0$ and $\mathcal U_v$ (Section \ref{sec: safety})}: leverage CBFs to find sufficient conditions to check whether there exist a set of initial conditions $X_0$, input constraint set $\tilde{\mathcal U}_v\subset\mathcal U_v$, and a control input $u_s$ for all $x\in X_0$ that can solve Problem \ref{Problem 1};
    \item[2)] \textit{Numerical method for computation of $X_0$ and $\mathcal U_v$ (Section \ref{sec: num method})}: use conditions in step 1) to formulate a numerical method for computing sets $X_0$ and $\tilde{\mathcal U}_v$;
    \item[3)] \textit{Feedback law synthesis (Section \ref{sec: QP control})}: use the sets $X_0$ and $\tilde{\mathcal U}_v$ from step 2) to design a feedback control law $u_s = k_s(x)$ that solves Problem \ref{Problem 1}.
\end{itemize}

\begin{figure}[t]
	\centering
	\includegraphics[width=\columnwidth,clip]{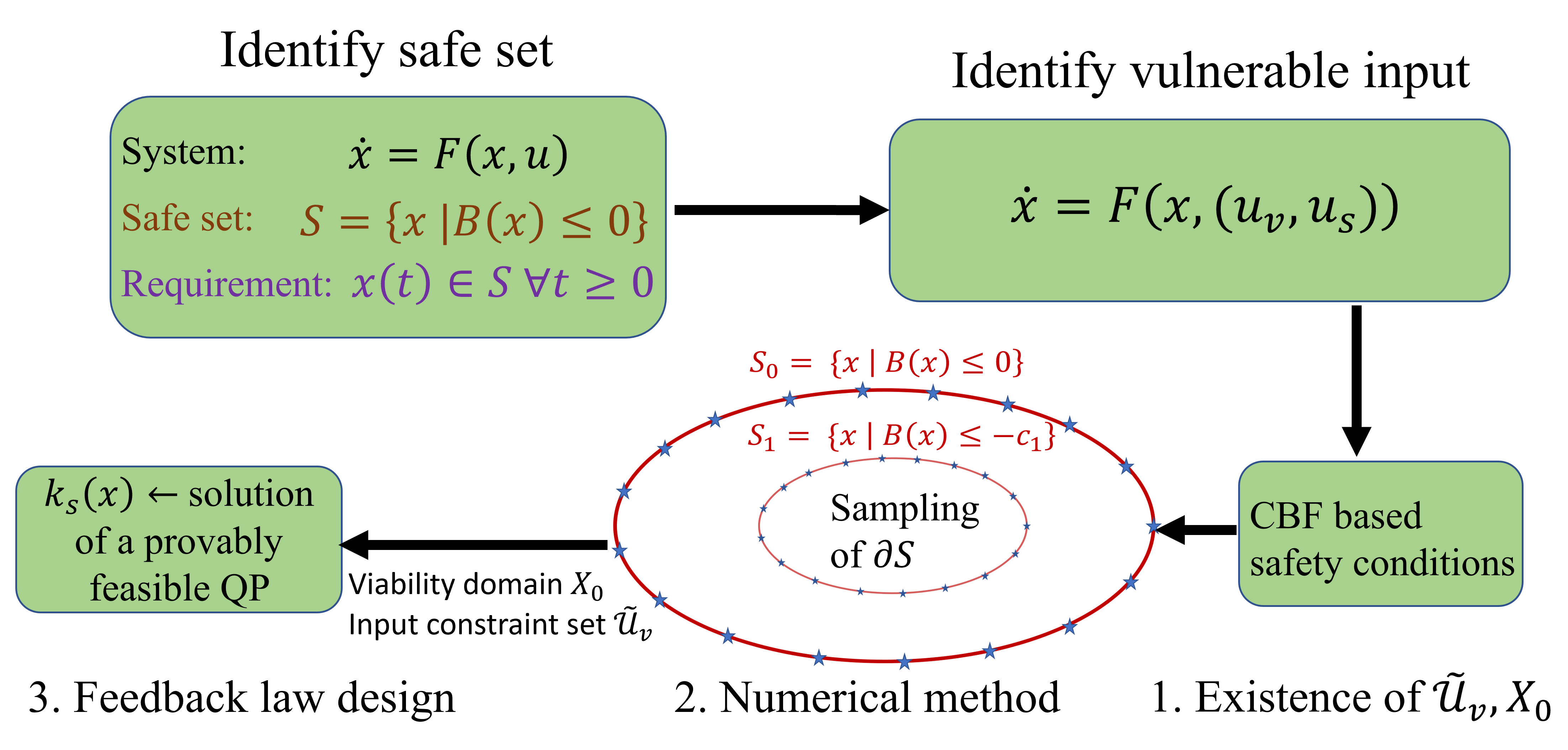}
	\caption{Approach for safe feedback design under attacks.}
	\label{fig:overview 2}
\end{figure}

\subsection{Preliminaries}
Next, we present preliminaries on forward invariance. 
\begin{Definition}
   A set $S\subset\mathbb R^n$ is termed as \textnormal{forward invariant} for system \eqref{eq: actual system} if every solution $x:\mathbb R_+\rightarrow\mathbb R^n$ of \eqref{eq: actual system} satisfies $x(t)\in S$ for all $t\geq 0$ and for all initial conditions $x(0)\in S$. 
\end{Definition}
Next, we review a sufficient condition for guaranteeing forward invariance of a set in the absence of an attack. For the sake of simplicity, in what follows, we assume that every solution of \eqref{eq: actual system} exists and is unique in forward time for all $t\geq 0$ whether or not there is an attack on the system.\footnote{In this work, we assume that even under attack, the solution of the system is unique in forward time. It is possible to study the case when this assumption does not hold using the notion of strong invariance (see \cite{clarke2008nonsmooth}).} 

\begin{Lemma}[\cite{blanchini1999set}]\label{lemma: nec suff safety d 0}
Given a continuously differentiable function $B:\mathbb R^n\rightarrow\mathbb R$, the set $S = \{x\; |\; B(x)\leq 0\}$ is forward invariant for \eqref{eq: actual system} with $d\equiv 0$ if the following condition holds:
\begin{equation}\label{eq: safety cond d 0}
    \inf_{u\in \mathcal U}L_{F}B(x,u)\leq 0 \quad \forall x\in \partial S.
\end{equation}
\end{Lemma}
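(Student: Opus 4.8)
The plan is to recognize condition \eqref{eq: safety cond d 0} as a Nagumo-type subtangentiality condition and to argue by contradiction at the first time a trajectory would leave $S$. First I would note that, since $S=\{x\,|\,B(x)\le 0\}$ with $B$ continuously differentiable, the boundary satisfies $\partial S\subseteq\{x\,|\,B(x)=0\}$; for a nondegenerate analysis I would take $\nabla B(x)\ne 0$ on $\partial S$ so that $\partial S$ is exactly the zero level set and \eqref{eq: safety cond d 0} expresses that the field $F(x,\cdot)$ can be steered to point into $S$ or tangent to it. Because $L_F B(x,u)=\frac{\partial B}{\partial x}(x)F(x,u)$ is precisely the rate of change of $B$ along a solution, \eqref{eq: safety cond d 0} says that at every boundary point a control exists (or is approached in the infimum) for which $B$ is nonincreasing; this is the controlled-invariance reading appropriate to a viability domain.

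Next I would set up the contradiction. Suppose a solution $x:\mathbb R_+\to\mathbb R^n$ with $x(0)\in S$ leaves $S$, and define the first exit time $\tau=\inf\{t\ge 0\,|\,B(x(t))>0\}$. By continuity of $t\mapsto B(x(t))$ we have $B(x(\tau))=0$, hence $x(\tau)\in\partial S$, and there is a sequence $t_k\downarrow\tau$ with $B(x(t_k))>0$. Consequently the upper right Dini derivative of $t\mapsto B(x(t))$ at $\tau$ is nonnegative, i.e. $L_F B(x(\tau),u(\tau))\ge 0$. Choosing the control that realizes \eqref{eq: safety cond d 0}, up to an $\varepsilon$-approximation of the infimum, forces $L_F B(x(\tau),u(\tau))\le 0$, which rules out a strict exit and yields $x(t)\in S$ for all $t$.

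The delicate point, and the step I expect to be the main obstacle, is the tangential case in which $L_F B(x(\tau),\cdot)$ can only be driven to $0$ rather than to a strictly negative value; then the first-order argument above does not by itself forbid the trajectory from drifting out of $S$. To close this gap I would either (i) invoke the comparison lemma after establishing a bound of the form $\dot B\le \alpha(B)$ in a neighborhood of $\partial S$ using continuity of $F$ and $\nabla B$, so that $B(x(0))\le 0$ propagates to $B(x(t))\le 0$; or (ii) use the standard Nagumo construction via Euler polygonal approximations, exploiting the closedness of $S$ together with subtangentiality at each boundary point, and pass to the limit. A secondary technicality is the existence of an admissible (measurable) control attaining the infimum in \eqref{eq: safety cond d 0}; since $F$ is continuous and the infimum need only be $\le 0$, this is handled by a measurable-selection / $\varepsilon$-approximation argument. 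As the result is classical, in the paper I would ultimately defer these details to the cited reference \cite{blanchini1999set}.
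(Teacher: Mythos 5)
The paper offers no proof of this lemma at all: it is imported verbatim from \cite{blanchini1999set}, so your plan to ultimately defer the details to that reference matches the paper's own treatment exactly. Your reading of \eqref{eq: safety cond d 0} is also the right one — because the infimum is over $u$, the condition asserts that $S$ can be rendered invariant by a suitable choice of control (viability / controlled invariance), not that $S$ is invariant under every admissible input — and your identification of the tangential case $L_FB=0$ as the crux of any first-order argument is correct.

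The genuine gap is that your headline argument, contradiction at the first exit time, cannot prove this statement, for two distinct reasons. First, as you concede, at the exit time $\tau$ the inequalities $D^+\,B(x(\tau))\ge 0$ and $L_FB(x(\tau),u(\tau))\le 0$ are perfectly compatible (both can equal zero), so there is no contradiction in exactly the case that matters. Second, and more fundamentally, the framing is mismatched to an existential statement: since the conclusion is that \emph{some} control keeps the trajectory in $S$, "suppose a solution leaves $S$" is not a hypothesis you can refute — many solutions (under bad inputs) genuinely do leave $S$. A trajectory comes bundled with its own input signal, so "choosing the control that realizes \eqref{eq: safety cond d 0}" for an already-given solution is not a meaningful operation. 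The proof must be constructive, which is precisely your option (ii): the Nagumo/viability-theorem construction via Euler polygons, selecting at each step a $u$ that (approximately) achieves the infimum, and passing to the limit. That passage needs structure you do not state: compactness of $\mathcal U$ (or attainment of the infimum) together with continuity of $F$, convexity of the velocity sets $F(x,\mathcal U)$ so that limits of the polygonal approximations are again solutions, and the regularity $\nabla B(x)\neq 0$ on $\partial S$ (which you do flag) so that the half-space condition $\nabla B(x)\cdot v\le 0$ coincides with the contingent cone of the sublevel set. With those hypotheses, plus the paper's standing assumption that solutions exist and are unique in forward time, the construction goes through and is essentially the argument in \cite{blanchini1999set}; your option (i), a comparison-lemma bound $\dot B\le\alpha(B)$ near $\partial S$, only makes sense after a feedback selection has been made and that selection is regular enough, which is an additional selection-theoretic step you would need to justify.
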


Satisfaction of \eqref{eq: safety cond d 0} means the set $S$ is termed as a \textit{viability} domain for system \eqref{eq: actual system} with $d\equiv 0$. Using Lemma \ref{lemma: nec suff safety d 0}, and following the notion of robust CBF in \cite{garg2021robust}, we can state the following result guaranteeing forward invariance in the presence of disturbance $d$.

\begin{Lemma}[\cite{garg2021robust}]\label{lemma: nec suff safety}
Given a continuously differentiable function $B:\mathbb R^n\rightarrow\mathbb R$, the set $S = \{x\; |\; B(x)\leq 0\}$ is forward invariant for \eqref{eq: actual system} under $d$ satisfying Assumption \ref{assum: d bound} if the following condition holds:
\begin{equation}\label{eq: safety cond}
    \inf_{u\in \mathcal U}L_{F}B(x,u)\leq -l_B\delta \quad \forall x\in \partial S,
\end{equation}
where $l_B$ is the Lipschitz constant of the function $B$.
\end{Lemma}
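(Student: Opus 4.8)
The plan is to reduce the claim to Lemma~\ref{lemma: nec suff safety d 0}: I will show that, under condition~\eqref{eq: safety cond}, the \emph{effective} dynamics $x\mapsto F(x,u)+d(t,x)$ that include the disturbance still satisfy the subtangentiality (Nagumo-type) condition~\eqref{eq: safety cond d 0} on $\partial S$. The guiding observation is that the worst-case contribution of $d$ to $\dot B$ along $\partial S$ is at most $l_B\delta$, and condition~\eqref{eq: safety cond} is designed to leave exactly this much margin.

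First I would compute the time derivative of $B$ along a solution of~\eqref{eq: actual system}. Since $B$ is continuously differentiable, for a fixed control value $u\in\mathcal U$ and state $x$,
\begin{equation*}
\dot B = \frac{\partial B}{\partial x}(x)\bigl(F(x,u)+d(t,x)\bigr) = L_F B(x,u) + \frac{\partial B}{\partial x}(x)\,d(t,x).
\end{equation*}
Next I would bound the disturbance term. Because $B$ is Lipschitz with constant $l_B$ and continuously differentiable, its gradient obeys $\bigl|\tfrac{\partial B}{\partial x}(x)\bigr|\le l_B$ for every $x$; combining this with the Cauchy--Schwarz inequality and Assumption~\ref{assum: d bound} yields, for all $t\ge 0$ and all $x$,
\begin{equation*}
\frac{\partial B}{\partial x}(x)\,d(t,x)\le \Bigl|\frac{\partial B}{\partial x}(x)\Bigr|\,|d(t,x)| \le l_B\,\delta.
\end{equation*}

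Then I would take the infimum over $u\in\mathcal U$ on the boundary. Since the disturbance term does not depend on $u$, for every $x\in\partial S$,
\begin{equation*}
\inf_{u\in\mathcal U}\dot B \le \inf_{u\in\mathcal U}L_F B(x,u) + l_B\delta \le -l_B\delta + l_B\delta = 0,
\end{equation*}
where the final inequality is precisely hypothesis~\eqref{eq: safety cond}. Hence the effective field satisfies condition~\eqref{eq: safety cond d 0} on $\partial S$ \emph{uniformly} in $t$ and in the disturbance realization, and Lemma~\ref{lemma: nec suff safety d 0} applied to this effective system delivers forward invariance of $S$.

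The step I expect to demand the most care is the uniform gradient bound $|\partial B/\partial x|\le l_B$ together with a clean invocation of Lemma~\ref{lemma: nec suff safety d 0} in the time-varying, disturbed setting: one must argue that because the subtangentiality inequality holds for the worst-case disturbance uniformly over $t$ rather than for a single autonomous vector field, the same boundary argument used for the $d\equiv 0$ case still goes through. Concretely, this reduces to the observation that whenever a solution reaches $\partial S$ one can select $u$ so that $\dot B\le 0$ independently of the particular value of $d(t,x)$, which is exactly what the uniform bound above guarantees.
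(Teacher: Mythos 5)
Your proof is correct in substance, and there is an important structural point to note: the paper itself does not prove this lemma at all --- it is imported from \cite{garg2021robust} as a citation, so there is no in-document proof to compare against. Your argument is the natural one underlying any such robust-CBF statement: since $B$ is $C^1$ with Lipschitz constant $l_B$, its gradient satisfies $\bigl|\tfrac{\partial B}{\partial x}(x)\bigr|\le l_B$ everywhere, so by Cauchy--Schwarz and Assumption \ref{assum: d bound} the disturbance contributes at most $l_B\delta$ to $\dot B$, and the $-l_B\delta$ margin in \eqref{eq: safety cond} is exactly calibrated to absorb this worst case, reducing the problem to the nominal condition \eqref{eq: safety cond d 0}. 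The one step that is not a literal deduction --- and which you correctly flag --- is the final invocation of Lemma \ref{lemma: nec suff safety d 0}: as stated, that lemma applies to the autonomous system with $d\equiv 0$, whereas your ``effective'' field $F(x,u)+d(t,x)$ is time-varying and depends on an unknown realization of $d$, so one cannot simply substitute it in as a new $F$. Your resolution is the right one: the control value achieving the decrease on $\partial S$ can be chosen independently of $t$ and of $d$ (only the uniform bound $\tfrac{\partial B}{\partial x}(x)\,d(t,x)\le l_B\delta$ is used), so the Nagumo-type boundary argument goes through uniformly over all admissible disturbances; a fully rigorous writeup would rerun that boundary argument (or cite a time-varying invariance theorem) rather than apply Lemma \ref{lemma: nec suff safety d 0} as a black box. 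With that caveat handled, your proposal is a complete and correct proof of the statement the paper only cites.
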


\section{Sufficient conditions for safety}\label{sec: safety}
In this section, we present sufficient conditions that guarantee the security of the system model \eqref{eq: actual system} against attacks on the input. We say that the system \eqref{eq: actual system} is \textit{secure with respect to the safety property} for a set $S$ if for all initial conditions $x(0)\in S$,  $x(t)\in S$ for all $t\geq 0$, $u_v\in \mathcal U_v$ and $d$ satisfying Assumption \ref{assum: d bound}. 
Given $B$ and $F$, define $H:\mathbb R^n\times\mathbb R^{m_v}\rightarrow\mathbb R$:
\begin{align}\label{eq: H def}
    H(x,u_v) \coloneqq \inf_{u_s\in \mathcal U_s}L_F B(x,(u_v,u_s)).
\end{align}
It is not necessary that the zero sublevel set $S$ of the function $B$ is a viability domain for system \eqref{eq: actual system}. Any non-empty sublevel set $S_c \coloneqq \{x\; |\; B(x)\leq -c\}$, where $c\geq 0$, being a viability domain is sufficient for safety of the system. Note that the set $S_c$ is non-empty for $0\leq c\leq -\min\limits_{x\in S}B(x)$. Define $c_M\in \mathbb R$ as
\begin{align}\label{eq: cM}
c_M \coloneqq -\min\limits_{x\in S}B(x),
\end{align}
so that the set of feasible values for $c$ is given as $[0, c_M]$.\footnote{Note that compactness of the set $S$ guarantees existence of $c_M$.} The following result provides sufficient conditions for a system to be secured with respect to the safety property. 
\begin{Proposition}\label{lemma: sufficient conditions non-attack}
Suppose there exist $c\in [0, c_M]$ and nonempty $\tilde{\mathcal U}_v\subset\mathcal U_v$ such that
\begin{align}\label{eq: invulnerable cond}
    \sup_{u_v\in \tilde{\mathcal U}_v}H(x,u_v) \leq -l_B\delta \quad \forall x\in \partial S_c,
\end{align}
and the system solutions are uniquely defined in forward time for all $x(0)\in S_c$. Then, for each $d$ satisfying Assumption \ref{assum: d bound}, system \eqref{eq: actual system} is secured with respect to the safety property for the set $S_c$.
\end{Proposition}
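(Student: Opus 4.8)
The plan is to reduce the adversarial statement to the robust forward-invariance result already available in Lemma~\ref{lemma: nec suff safety}, after rewriting $S_c$ as a zero-sublevel set and carefully unpacking the order of quantifiers between the attacker and the defender. First I would introduce the shifted barrier $B_c(x) \coloneqq B(x) + c$, so that $S_c = \{x \mid B_c(x) \leq 0\}$ and $\partial S_c = \{x \mid B_c(x) = 0\}$. Since $c$ is constant, $L_F B_c(x,u) = L_F B(x,u)$ and $B_c$ inherits the same Lipschitz constant $l_B$ as $B$. This way $S_c$ plays exactly the role that $S$ plays in Lemma~\ref{lemma: nec suff safety}, and the target becomes verifying the boundary condition of that lemma for the closed loop under any admissible attack.

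Next I would reconcile the $\sup_{u_v}$ in \eqref{eq: invulnerable cond} with the $\inf_{u}$ appearing in \eqref{eq: safety cond}. The key point is that security is a \emph{min--max} requirement: for \emph{every} attack value $u_v \in \tilde{\mathcal U}_v$ the defender must be able to pick a secure input $u_s$ with $L_F B(x,(u_v,u_s)) \leq -l_B\delta$ on $\partial S_c$. By the definition of $H$ in \eqref{eq: H def}, this is exactly $H(x,u_v) \leq -l_B\delta$ for all $u_v \in \tilde{\mathcal U}_v$, i.e.\ $\sup_{u_v \in \tilde{\mathcal U}_v} H(x,u_v) \leq -l_B\delta$, which is precisely the hypothesis. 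Concretely, I would fix an arbitrary attack signal $u_v:\mathbb R_+ \to \tilde{\mathcal U}_v$ and, using that $H(x,u_v(t)) \leq -l_B\delta$ on $\partial S_c$, select the secure feedback $u_s = k_s(x) \in \mathcal U_s$ realizing (or approaching) the infimum defining $H$, so that $L_F B(x,(u_v(t),k_s(x))) \leq -l_B\delta$ whenever $x \in \partial S_c$.

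Finally, with this feedback fixed, the closed loop reads $\dot x = F(x,(u_v(t),k_s(x))) + d(t,x)$, and on $\partial S_c$ I would bound $\frac{d}{dt}B(x) = L_F B(x,(u_v(t),k_s(x))) + \frac{\partial B}{\partial x}(x)\,d(t,x) \leq -l_B\delta + l_B\delta = 0$, using $|\frac{\partial B}{\partial x}d| \leq l_B|d| \leq l_B\delta$ since $l_B$ bounds $|\nabla B|$. The nonpositivity of $\dot B$ on the boundary is exactly the pointwise (Nagumo-type) condition underlying Lemma~\ref{lemma: nec suff safety}, so $S_c$ is rendered forward invariant; as the attack $u_v(\cdot)$ and the disturbance $d$ were arbitrary in $\tilde{\mathcal U}_v$ and under Assumption~\ref{assum: d bound}, the system is secured with respect to the safety property for $S_c$. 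The main obstacle I anticipate is technical rather than conceptual: because $H$ is defined through an infimum, producing an admissible (e.g.\ measurable) feedback $k_s$ that actually \emph{attains} $L_F B \leq -l_B\delta$ may require a measurable-selection argument or a compactness/continuity assumption on $\mathcal U_s$; without attainment one only secures $L_F B \leq -l_B\delta + \varepsilon$, so I would either assume $\mathcal U_s$ compact (guaranteeing the infimum is achieved) or pass to the limit carefully to preserve the $\dot B \leq 0$ conclusion.
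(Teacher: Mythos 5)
Your proof is correct, and it is in fact more careful than the paper's own proof on the one point that matters. The paper's argument is a two-line reduction: it observes that $\inf_{u_s\in\mathcal U_s,\,u_v\in\tilde{\mathcal U}_v}L_FB(x,(u_v,u_s)) = \inf_{u_v\in\tilde{\mathcal U}_v}H(x,u_v)\leq \sup_{u_v\in\tilde{\mathcal U}_v}H(x,u_v)\leq -l_B\delta$, concludes that condition \eqref{eq: safety cond} of Lemma \ref{lemma: nec suff safety} holds on $\partial S_c$ (with the pair $(u_v,u_s)$ treated as a single control), and invokes that lemma. Read literally, this weakening of the sup--inf hypothesis to an inf--inf condition only establishes that $S_c$ is a viability domain for the \emph{joint} input, i.e., that \emph{some} choice of $(u_v,u_s)$ keeps trajectories in $S_c$ --- it does not by itself deliver the claimed conclusion that safety holds for \emph{every} attack signal $u_v$. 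Your route fixes exactly this: by freezing an arbitrary attack $u_v:\mathbb R_+\to\tilde{\mathcal U}_v$, using the hypothesis in its correct quantifier order ($\sup_{u_v}\inf_{u_s}$, i.e., for each attack value the defender can respond), and then running the robust-invariance argument ($\dot B\leq -l_B\delta + l_B\delta = 0$ on $\partial S_c$, via $|\nabla B|\leq l_B$) for the resulting time-varying system in which only $u_s$ is a control, you obtain forward invariance uniformly over attacks, which is what ``secured with respect to the safety property'' requires. Your shifted barrier $B_c = B+c$ makes the application of the lemma to $S_c$ legitimate (the paper applies it to $S_c$ implicitly), and the attainment/measurable-selection caveat you raise about the infimum defining $H$ is a genuine subtlety, though it is inherited from Lemma \ref{lemma: nec suff safety} itself (whose condition \eqref{eq: safety cond} has the same issue) and is conventionally absorbed by assuming $\mathcal U_s$ compact, as you suggest. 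In short: same key lemma, but your per-attack decomposition is the rigorous version of an argument the paper compresses to the point of leaving a logical gap.
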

\begin{proof}
Note that 
\begin{align*}
    \inf_{u_s\in\mathcal U_s\atop u_v\in \tilde{\mathcal U}_v}L_F(x,(u_v, u_s)) = \inf_{u_v\in \tilde{\mathcal U}_v}H(x,u_v)\leq \sup_{u_v\in \tilde{\mathcal U}_v}H(x,u_v).
\end{align*}
Thus, from  \eqref{eq: invulnerable cond}, it follows that \eqref{eq: safety cond} holds. Thus, per Lemma \ref{lemma: nec suff safety}, the set $S_c$ is forward invariant and it holds that the system \eqref{eq: actual system} is secured with respect to the safety property for set $S_c$.  
\end{proof}

{\noindent Note that satisfaction of the conditions in Proposition \ref{lemma: sufficient conditions non-attack} implies that for all $x\in \partial S_c$ and $u_v\in \bar{ \mathcal U}_v$,  there exists an input $u_s\in \mathcal U_s$ such that the inequality $L_FB(x,(u_v, u_s))\leq -l_B\delta$ holds. This, in turn, implies that the set $S_c$ is a viability domain for system \eqref{eq: actual system}. Condition \eqref{eq: invulnerable cond} requires checking the inequality $\sup\limits_{u_v\in \tilde{\mathcal U}_v}H(x,u_v)\leq -l_B\delta$ for all points on the boundary of the set $S_c$. Such conditions are commonly used in the literature for control synthesis, assuming that the viability domain is known. However, it is not an easy task to compute a viability domain in practice for a general class of nonlinear systems \textit{a priori}. In the next section, we present a computationally tractable method where we show that checking a modification of the inequality in \eqref{eq: invulnerable cond} on a set of sampling points on the boundary is sufficient. }

\section{Viability domain under bounded inputs}\label{sec: num method}
In this section, we present numerical algorithms to assess whether given system \eqref{eq: actual system} and the function $B$, there exist $c$ and an input constraint set $\tilde{\mathcal U}_v$ such that condition \eqref{eq: invulnerable cond} holds. First, we present a sampling-based method for evaluating whether the condition \eqref{eq: invulnerable cond} holds by checking a modified inequality at a finite set of sampling points. Then, we propose an iterative method to compute $c$ and the set $\tilde{\mathcal U}_v$.

\subsection{Viability domain using sampling data}
We start by making the following assumption on the regularity of the function $H$ defined in \eqref{eq: H def}.
\begin{Assumption}\label{assum: LC F del B}
The function $\sup_{u_v\in \tilde{\mathcal U}_v}H(\cdot,u_v)$
is Lipschitz continuous on $S$ with constant $l_H>0$. 
\end{Assumption}

\noindent 
First, to illustrate the method, we consider the 3-D case, i.e., when $x\in \mathbb R^3$. 
If the compact set $S\subset \mathbb R^3$ is diffeomorphic to a unit sphere in $\mathbb R^3$, then the sampling points on the boundary of the unit sphere can be used to obtain the points on the boundary of $S_c$. Thus, without loss of generality, we can study the case when $S\subset \mathbb R^3$ is a unit sphere with center $x_o\in \mathbb R^3$. Let $\{x_i\}_{\mathcal I}$, with each $x_i\in \partial S_c$, denote the set of $N_p$ sampling data points on the boundary of the sublevel set $S_c$ for a given $c\in [0, c_M]$ with $c_M$ defined in \eqref{eq: cM} and $\mathcal I \coloneqq \{1, 2, \dots, N_p\}$. The sampling points $\{x_i\}_{\mathcal I}$ are such that they constitute a polyhedron $P_{\mathcal I}$ with $N_f>0$ triangular faces, $T_1, T_2, \dots, T_{N_f}$, such that $P_{\mathcal I}$ \textit{triangulates} the boundary $\partial S_c$, i.e., the intersection of any two distinct triangles is either empty, a single vertex, or a single edge. Figure \ref{fig:3d sampling} shows an example of triangulation of a unit sphere in $\mathbb R^3$. Interested readers on algorithms and details on triangulation are referred to \cite{oudot2003provably}, and the references therein.

Note that a tetrahedron is the \textit{minimal} triangulation (i.e., a triangulation with minimum number of triangular faces) for a unit sphere. Using geometric arguments, it is easy to show that the minimum possible value of the maximum of the inter-vertex distances for a tetrahedron inscribed in a unit sphere is $\sqrt{3}$. The corresponding arc-length along the boundary of the unit sphere (denote as $d_a$) is $2\sin^{-1}\sqrt{\frac{3}{4}}$. It follows that if $d_a\leq 2\sin^{-1}\sqrt{\frac{3}{4}}$, then there must be at least $N_p = 4$ points in the polyhedron. Finally, with $0\leq r_c\leq 1$ being the radius of the sphere $S_c$,\footnote{If the set $S_c$ is defined as $S_c = \{x\; |\; |x|^2-1\leq -c\}$, then $r_c = \sqrt{1-c}$.} the corresponding arc-length for $S_c$ is
\begin{align}\label{eq: def dM}
    d_M \coloneqq 2r_c\sin^{-1}\sqrt{\frac{3}{4}}.
\end{align}

\noindent Now, to ensure that they are enough sampling points, the following conditions can be imposed on $\{x_i\}_{\mathcal I}$ for a given $c\in [0,  c_M]$ and $d_a\in \begin{bmatrix}0, d_M\end{bmatrix}$ 
\begin{itemize}
    \item For each $x\in \partial S_c$, there exists a triangular face $T_j$ with vertices $x_{j_1}, x_{j_2}, x_{j_3}\in \{x_i\}_{\mathcal I}$, of the polyhedron $P_{\mathcal I}$ generated by $\{x_i\}_{\mathcal I}$, such that $x_o+\theta (x-x_o)\in T_j$ for some $0\leq \theta\leq 1$; and 
    \item The following holds:
\begin{align}\label{eq: xj xk dist cond 3d}
  \max_{l\neq m \atop l, m = 1, 2, 3}d_{S_c}(x_{j_l}, x_{j_m})\leq d_a,
\end{align}
where $d_{S_c}(x,y)$ denotes the shortest  arc-length between the points $x,y\in \partial S_c$.
\end{itemize}

In plain words, the above conditions require for each point $x\in \partial S_c$, the line joining the center $x_o$ and $x$ intersects a triangular face of the polyhedron such that the distance along the boundary $\partial S_c$ between the vertices of this face is bounded by $d_a$. Note that smaller $d_a$ requires larger number of sampling points $N_p$. 
\begin{figure}[t]
	\centering
	\includegraphics[width=0.3\columnwidth,clip]{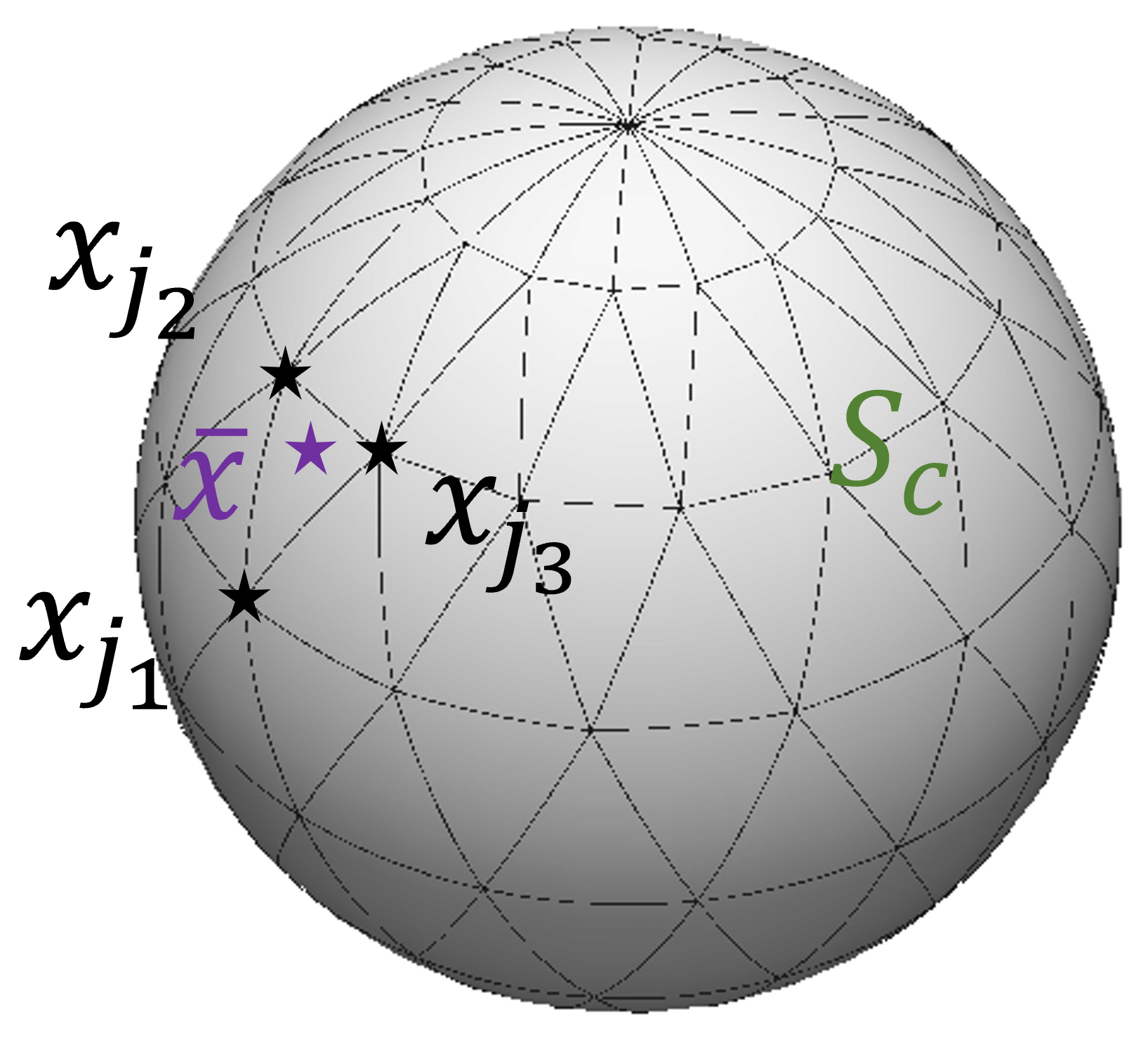}
	\caption{3-D case: Triangulating sampling of the boundary $\partial S_c$.}
	\label{fig:3d sampling}
\end{figure}
Now, we show that 
if the following holds
\begin{align}\label{eq: sup H cond 3D}
    \sup_{u_v\in \tilde{\mathcal U}_v}H(x_i,u_v)\leq -l_Hd_a-l_B\delta \quad \forall i\in \mathcal I,
\end{align}
where $l_B$ is the Lipschitz constant for $B$ and $\delta, l_H$ are as defined in Assumptions \ref{assum: d bound} and \ref{assum: LC F del B}, respectively,
then, \eqref{eq: invulnerable cond} holds.
With $c\in [0, c_M]$, the set $S_c$ is non-empty, and with $d_a\in \begin{bmatrix}0,  d_M\end{bmatrix}$, there exist sufficient points $N_p$ to have a polyhedron that can triangulate the boundary $\partial S_c$. Under the conditions imposed on $\{x_i\}_{\mathcal I}$, for every $\bar x\in \partial S_c$, there exists a triangular face $T_j$ with coordinates $x_{j_1}, x_{j_2}, x_{j_3}\in \{x_i\}_{\mathcal I}$   satisfying \eqref{eq: xj xk dist cond 3d} and $0\leq \theta\leq 1$ such that $x_o+\theta(\bar x-x_o)\in T_j$. Using Lipschitz continuity of $\sup_{u_v}H(\cdot, u_v)$ per Assumption \ref{assum: LC F del B}, it holds that $\sup_{u_v\in \tilde{\mathcal U}_v}H(\bar x,u_v) \leq  \sup_{u_v\in \tilde{\mathcal U}_v}H(x,u_v) + l_H|\bar x-x|$,
for all $x, \bar x\in \partial S_c$.
For $x = x_i$, $i\in \mathcal I$, using \eqref{eq: sup H cond 3D} and the fact that $|x-y|\leq d_{S_c}(x,y)$ for all $x, y\in \partial S_c$, we obtain that
\begin{align*}
    \sup_{u_v\in \tilde{\mathcal U}_v}H(\bar x,u_v)\leq & -l_Hd_a-l_B\delta + l_Hd_{S_c}(\bar x, x_i), \; \forall i\in \mathcal I.
\end{align*}
Since the projection of $\bar x$ lies in the triangular face $T_j$, it holds that $d_{S_c}(\bar x, x_{j_k})\leq d_{S_c}(x_{j_l}, x_{j_k})$ for $l\neq k$, $l, k \in \{1, 2, 3\}$. Using this and \eqref{eq: xj xk dist cond 3d}, we obtain
\begin{align*}
    \sup_{u_v\in \tilde{\mathcal U}_v}H(\bar x,u_v) \leq -l_Hd_a -l_B\delta+l_Hd_a = -l_B\delta, \; \forall \bar x\in \partial S_c.
\end{align*}
Thus, checking the inequality \eqref{eq: sup H cond 3D} at a finite number of points is a computationally tractable method for assessing whether \eqref{eq: invulnerable cond} holds for a given $c$ and $\tilde{\mathcal U}_v$. Note that for a given $F, B, \tilde{\mathcal U}_v$ and $\delta$, a smaller value of $d_a$ implies that the right-hand side of \eqref{eq: sup H cond 3D} is less negative, thus, making it easier to satisfy the inequality. At the same time, due to \eqref{eq: xj xk dist cond 3d}, a smaller value of $d_a$ requires more sampling points $N_p$, and hence, checking the inequality at more points. Thus, there is a trade-off between the ease of satisfaction of \eqref{eq: sup H cond 3D} and the number of points at which the inequality should be checked.

The above arguments can be generalized to the $n-$dimensional case. Using the sampling approach in \cite{leopardi2009diameter} for a unit sphere in $n-$dimension, combined with Delaunay Triangulation of the sampling points (see e.g., \cite{de1997computational}), an $(n-1)-$dimensional \textit{simplex} can be obtained. If the compact set $S\subset \mathbb R^n$ is diffeomorphic to a unit $(n-1)-$sphere, then sampling points on the boundary of $S$ can be obtained using the sampling points for the $(n-1)-$unit sphere. Thus, we study the case when the set $S$ is an $(n-1)-$unit sphere.

Let $\{x_i\}_{\mathcal I}$, with each $x_i\in \partial S_c$, denote the set of $N_p$ sampling data points on the boundary of the sublevel set $S_c$ for a given $c\in [0, c_M]$ with $c_M$ defined in \eqref{eq: cM} and $\mathcal I \coloneqq \{1, 2, \dots, N_p\}$. The sampling poins $\{x_i\}_{\mathcal I}$ constitute a simplex $\mathcal S_{\mathcal I}$ with $N_f>0$ faces, $\mathcal X_1, \mathcal X_2, \dots, \mathcal X_{N_f}$. 
For a unit sphere in $\mathbb R^n$, the minimum number of points in the simplex is $(n+1)$, and the minimum possible value of the maximum of the lengths of its edges is $\sqrt{\frac{2(n+1)}{n}}$. The length, denoted as $d_a$, of the corresponding arc-length on the boundary $\partial S_c$ is $2r_c\sin^{-1}\sqrt{\frac{(n+1)}{2n}}$, where $0\leq r_c\leq 1$ is the radius of the sphere $S_c$. Thus, with $d_a\leq 2r_c\sin^{-1}\sqrt{\frac{(n+1)}{2n}}$, there must be at least $(n+1)$ points in the simplex. For the sake of brevity, define $d_{M,n}\coloneqq 2r_c\sin^{-1}\sqrt{\frac{(n+1)}{2n}}$. We make the following assumption on the sampling points $\{x_i\}_{\mathcal I}$.  
\begin{Assumption}\label{assum: max dist tring nD}
Given $c\in [0,  c_M]$, the sampling points $\{x_i\}_{\mathcal I}$ and $d_a\in \begin{bmatrix}0,  d_M\end{bmatrix}$, for each $x\in \partial S_c$, there exists a face $\mathcal X_j$ with vertices $\{x_{j_1}, x_{j_2}, \dots, x_{j_n}\}\in \{x_i\}_{\mathcal I}$, where $j\in \{1, 2, \dots, N_f\}$, of the simplex $\mathcal S_{\mathcal I}$ generated by $\{x_i\}_{\mathcal I}$, such that $x_o+\theta (x-x_o)\in \mathcal X_j$ for some $0\leq \theta\leq 1$, and the following holds:
\begin{align}\label{eq: xj xk dist cond nd}
   \max_{l\neq m \atop l, m = 1, 2, \dots, n}d_{S_c}(x_{j_l}, x_{j_m})\leq d_a,
\end{align}
where $d_{S_c}(x,y)$ denotes the shortest arc-length between the points $x,y\in \partial S_c$.
\end{Assumption}

We have the following result when $S$ is $(n-1)-$unit sphere. 

\begin{Theorem}\label{thm: sufficient sup H cond}
Suppose that the function $H$ defined in \eqref{eq: H def} satisfies Assumption \ref{assum: LC F del B}. Given $c\in [0, c_M]$, $d_a\in \begin{bmatrix}0,  d_{M,n}\end{bmatrix}$, and the sampling points $\{x_i\}_{\mathcal I}$,
if Assumption \ref{assum: max dist tring nD} and \eqref{eq: sup H cond 3D} hold, 
then, \eqref{eq: invulnerable cond} holds.
\end{Theorem}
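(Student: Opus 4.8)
The plan is to lift the prose argument already given for the 3-D case essentially verbatim, replacing the triangular faces of the inscribed polyhedron by the $(n-1)$-dimensional faces $\mathcal X_j$ of the simplex $\mathcal S_{\mathcal I}$; Assumption \ref{assum: max dist tring nD} is tailored so that every geometric ingredient used in the 3-D discussion has a direct analogue. First I would fix an arbitrary $\bar x\in\partial S_c$ and invoke Assumption \ref{assum: max dist tring nD} to extract a face $\mathcal X_j$ with vertices $x_{j_1},\dots,x_{j_n}\in\{x_i\}_{\mathcal I}$ satisfying the arc-length bound \eqref{eq: xj xk dist cond nd}, together with a scalar $\theta\in[0,1]$ such that $x_o+\theta(\bar x-x_o)\in\mathcal X_j$; that is, the radial projection of $\bar x$ onto the simplex lands inside this particular face.

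Next I would bring in the regularity hypothesis. By Assumption \ref{assum: LC F del B}, the map $\sup_{u_v\in\tilde{\mathcal U}_v}H(\cdot,u_v)$ is $l_H$-Lipschitz on $S$, so for every sample $x_i$ one has
\begin{align*}
\sup_{u_v\in\tilde{\mathcal U}_v}H(\bar x,u_v)\leq \sup_{u_v\in\tilde{\mathcal U}_v}H(x_i,u_v)+l_H|\bar x-x_i|.
\end{align*}
Specializing $x_i$ to a vertex of the selected face, substituting the hypothesis \eqref{eq: sup H cond 3D}, and using the elementary bound $|x-y|\leq d_{S_c}(x,y)$ for points on the sphere $\partial S_c$, I obtain
\begin{align*}
\sup_{u_v\in\tilde{\mathcal U}_v}H(\bar x,u_v)\leq -l_Hd_a-l_B\delta+l_Hd_{S_c}(\bar x,x_{j_k})
\end{align*}
for the chosen vertex $x_{j_k}$.

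The step I expect to be the main obstacle — and the only place where the dimension genuinely enters — is the geometric claim that the arc-distance $d_{S_c}(\bar x,x_{j_k})$ to a suitably chosen vertex of $\mathcal X_j$ is controlled by the maximal inter-vertex arc-length of that face. Concretely, because the radial projection of $\bar x$ lies inside the inscribed $(n-1)$-simplex $\mathcal X_j$, I must argue that there is a vertex $x_{j_k}$ with $d_{S_c}(\bar x,x_{j_k})\leq\max_{l\neq m}d_{S_c}(x_{j_l},x_{j_m})$, i.e. a point whose radial projection falls in a spherical simplex is at geodesic distance from its nearest vertex no larger than the geodesic diameter of that simplex, a quantity attained along an edge. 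This is the generalization of the assertion ``the projection of $\bar x$ lies in the triangular face'' used in the 3-D argument, and it is the delicate point that warrants careful justification rather than a routine calculation. Once it is in hand, combining it with \eqref{eq: xj xk dist cond nd} gives $d_{S_c}(\bar x,x_{j_k})\leq d_a$, and the estimate collapses to
\begin{align*}
\sup_{u_v\in\tilde{\mathcal U}_v}H(\bar x,u_v)\leq -l_Hd_a-l_B\delta+l_Hd_a=-l_B\delta.
\end{align*}
Since $\bar x\in\partial S_c$ was arbitrary, this is precisely \eqref{eq: invulnerable cond}, which completes the proof.
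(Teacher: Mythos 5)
Your proposal is correct and follows essentially the same route as the paper's own proof: fix $\bar x\in\partial S_c$, use Assumption \ref{assum: max dist tring nD} to locate the face $\mathcal X_j$ containing the radial projection of $\bar x$, apply the Lipschitz bound from Assumption \ref{assum: LC F del B} together with \eqref{eq: sup H cond 3D} at a vertex, and close the estimate with $|\bar x - x_{j_i}|\leq d_{S_c}(\bar x,x_{j_i})$ and the arc-length bound \eqref{eq: xj xk dist cond nd}. The geometric step you single out as the delicate point---that the geodesic distance from $\bar x$ to a vertex of its containing face is controlled by the inter-vertex arc-lengths, i.e.\ $d_{S_c}(\bar x,x_{j_i})\leq d_{S_c}(x_{j_k},x_{j_i})$---is exactly the step the paper uses as well, and the paper asserts it without further justification, so your argument matches the published one in both structure and level of rigor.
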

\begin{proof}
With $c\in [0, c_M]$, the set $S_c$ is non-empty, and with $d_a\in \begin{bmatrix}0,  d_M\end{bmatrix}$, there exist sufficient points $N_p$ to have a simplex. Now, consider any point $\bar x\in \partial S_c$. Under Assumption \ref{assum: max dist tring nD}, for every $\bar x\in \partial S_c$, there exists a face $\mathcal X_j$ of the simplex $\mathcal S_{\mathcal I}$, such that the line joining the center of the sphere $S_c$ and the point $\bar x$ lies on this face. 
Using Lipschitz continuity of $\sup_{u_v}H(\cdot, u_v)$ under Assumption \ref{assum: LC F del B} and \eqref{eq: sup H cond 3D}, it holds that
\begin{align*}
    \sup_{u_v\in \tilde{\mathcal U}_v}H(\bar x,u_v) \leq & \sup_{u_v\in \mathcal U_v}H(x,u_v) + l_H|\bar x-x|\\
    \leq & -l_Hd_a-l_B\delta + l_H|\bar x-x|,
\end{align*}
for all $x, \bar x\in \partial S_c$. 
Using the inequality for $x = x_{j_i}$, $i\in \{1, \dots, n\}$ and the fact that $|\bar x-x_{j_i}|\leq d_{S_c}(\bar x,x_{j_i})\leq d_{S_c}(x_{j_k},x_{j_i})$ for any $k\neq i$, $k\in \{1, \dots, n\}$ and \eqref{eq: xj xk dist cond nd}, we obtain that
\begin{align*}
     \sup_{u_v\in \tilde{\mathcal U}_v}H(\bar x,u_v)
     \leq & -l_Hd_a-l_B\delta + l_H d_a =  -l_B\delta
\end{align*}
for all $\bar x\in \partial S_c$, which completes the proof.
\end{proof}

Note that there are three set of parameters that can facilitate satisfaction of \eqref{eq: sup H cond 3D} in the following manner:
\begin{itemize}
    \item Set $\tilde{\mathcal U}_v$: smaller $\tilde{\mathcal U}_v$ makes it easier to satisfy \eqref{eq: sup H cond 3D};
    \item Parameter $c$: larger value of $c$ results in smaller values of $d_M$, thus, reducing the right-hand side of \eqref{eq: sup H cond 3D}, and making it easier to satisfy it; and
    \item Number of sampling points $N_p$: larger $N_p$ results in smaller value of $d_{M,n}$.
\end{itemize}

Based on these observations, an iterative algorithm can be formulated to check whether there exists a feasible $c$ and a non-empty set $\tilde{\mathcal U}_v$, such that \eqref{eq: sup H cond 3D} holds. 

\subsection{Iterative algorithm}
We formulate our algorithm with the following steps:
\begin{itemize}
    \item[1)] For a given value of $0\leq c\leq c_M$, $\tilde{\mathcal U}_v$ and number of sampling points $N_p$, sample $\{x_i\}_{\mathcal I}$ from the set $\partial S_c$ and check if \eqref{eq: sup H cond 3D} holds for all the sampling points;
    \item[2)] Shrink $\tilde{\mathcal U}_v$,  increase $c$ and repeat steps 1)-2) until the condition \eqref{eq: sup H cond 3D} is satisfied for all the sampling points, or there does not exist a $c$ and a non-empty set $\tilde{\mathcal U}_v$;
    \item[3)] Increase $N_p$ and repeat steps 1)-3) until \eqref{eq: sup H cond 3D} holds or the maximum value ($N_{max}$) of $N_p$ is reached.
\end{itemize}
Using these steps, we propose Algorithm \ref{algo: iter U c comb} which returns a feasible $c$ and a set $\tilde{\mathcal U}_v$ such that safety is guaranteed for all $x\in S_c$ and $u_v\in \tilde{\mathcal U}_v$. In other words, this algorithm can compute the set of initial conditions $S_c$, and the set of \textit{tolerable} attacked inputs via $\tilde{\mathcal U}_v$ such that the system can satisfy the safety property under attacks. The order in which the parameters $c, \tilde{\mathcal U}_v$, and $N_p$ are tuned can be changed, which can potentially change the output of the algorithm. 

\begin{algorithm}[h]\label{algo: iter U c comb}
\SetAlgoLined
\KwData{$f,g_v, g_s,\mathcal U_v, \mathcal U_s, B, d_a, \varepsilon_1, \varepsilon_2, \delta, N_{max}, N_{c0}$}
\textbf{Initialize: }$\tilde{\mathcal U}_v = \mathcal U_v, c = 0, N_p = N_{c0}$\;
\While{$N_p< N_{max}$}
{
 \While{$c\leq c_M$}{
 Sample $\{x_i\}_{\mathcal I}$ from $\{B(x)\leq -c\}$\;
 \While{$\tilde{\mathcal U}_v\neq \emptyset$}{
  \lIf{\footnotesize{$\{i\in \mathcal I\; |\; H(x_i, u_v)\!>\!-l_Hd_a\!+\!l_B\delta\} \neq \emptyset $}\label{line: h s U}}{
  
   $\hspace{10pt}\tilde{\mathcal U}_v = \tilde{\mathcal U}_v\ominus \varepsilon_1 $\ \label{step: Ua reduction}
   }
}
 \lIf {$\tilde{\mathcal U}_v = \emptyset$}{
 
 $\hspace{10pt}c = c+\varepsilon_2$;\\
 $\hspace{10pt}\tilde{\mathcal U}_v = \mathcal U_v$}
 }
 $N_p = 2\; N_p$; \\
 $c = 0$;
 }
 \textbf{Return: }$\bar {\mathcal U}_v, c$\;
 \caption{Iterative method for computing $\tilde{\mathcal U}_v, c$}
\end{algorithm}

\begin{Remark}\label{rem: attack model 1}
If it is unknown which components of the input are vulnerable, then all possible combinations of $u_v$ and $u_s$ can be considered, and Algorithm \ref{algo: iter U c comb} can be used to compute $c$ for each such combination. Then, the maximum of all such values can be used to define the set $S_c$, guaranteeing the system's security against attack on any control inputs. 
\end{Remark}

\begin{Remark}
The computational complexity of Algorithm \ref{algo: iter U c comb} is only a function of the number of sampling points $N_p$ (which, in principle, is a user-defined parameter) and is independent of the non-linearity of the function $F$, and linear in the dimension $n$. Thus, unlike reachability based tools in \cite{bansal2017hamilton,choi2021robust} where the computational complexity grows exponentially with the system dimension $n$, or SOS based tools \cite{wang2018permissive} that are only applicable to a specific class of systems with linear or polynomial dynamics, Algorithm \ref{algo: iter U c comb} can be used for general nonlinear system with high dimension.  
\end{Remark}
So far, we presented sufficient conditions to establish the safety of the system \eqref{eq: actual system} under attacks (Proposition \ref{lemma: sufficient conditions non-attack}), a sampling-based method to verify these conditions using a finite number of sampling points (Theorem \ref{thm: sufficient sup H cond}), and iterative methods to compute the set of initial conditions and the input constraint set to satisfy these conditions (Algorithm \ref{algo: iter U c comb}). Thus, in brief, using the results in this section, we can compute the viability domain $S_c$ and control input constraint set $\tilde{\mathcal U}_v\subset\mathcal U_v$, such that for all $x\in S_c$ and $u_v\in \tilde{\mathcal U}_v$, there exists a control input $u_s\in \mathcal U_s$ that can keep the system trajectories in the set $S_c$ at all times. In the next section, we present a method of computing such a control input using a QP formulation.

\section{QP based feedback design}\label{sec: QP control}
In this section, we use the sufficient conditions from the previous section to design a feedback law for the system \eqref{eq: actual system} that guarantees security with respect to the safety property under Assumption \ref{assum: d bound}. 
We assume that the control input constraint set is given as $\tilde{\mathcal U}\coloneqq\tilde{\mathcal U}_v\times\mathcal U_s =\{v\in \mathbb R^{m}\; |\; u_{j,min}\leq v_j\leq u_{j,max}\}$, i.e., as a box-constraint set where $u_{j,min}< u_{j,max}$ are the lower and upper bounds on the individual control inputs $v_j$ for $j = 1, 2, \ldots, m$, respectively. We can write $\mathcal U$ in a compact form as $\tilde{\mathcal U} = \{v\; |\; A_{u}v\leq b_{u}\}$ where $A_u\in \mathbb R^{2m\times m}, b_u\in \mathbb R^{2m}$. Furthermore, we assume that the system model \eqref{eq: actual system} is control affine, and is of the form:
\begin{align}\label{eq: cont affine}
    \dot x & = f(x) + g_v(s)u_v+g_s(x)u_s + d(t,x),
\end{align}
where $f:\mathbb R^n\rightarrow\mathbb R^n$, $g_v: \mathbb R^n\rightarrow\mathbb R^{n\times m_v}$ and $g_s: \mathbb R^n\rightarrow\mathbb R^{n\times (m-m_v)}$ are continuous functions. In this case, the function $H:\mathbb R^n\times\mathbb R^{m_v}\to\mathbb R$ reads{\small
\begin{align}
    \hspace{-5pt}H(x,u_v) = \hspace{-3pt}\inf_{u_s\in \mathcal U_s}\hspace{-5pt}L_fB(x) + L_{g_s}B(x)u_s +L_{g_v}B(x)u_v.
\end{align}}\normalsize
In addition to the safety requirement in Problem \ref{Problem 1}, we impose the requirement of convergence of the system trajectories of \eqref{eq: cont affine} to the origin. To this end, given a twice continuously differentiable, positive definite function $V:\mathbb R^n\rightarrow \mathbb R_+$ as a candidate Lyapunov function, we use the condition 
\begin{align}
   \hspace{-10pt}L_fV(x) + L_{g_s}V(x)u_s+L_{g_v}V(x)u_v \leq -\zeta V(x) - l_V \delta,
\end{align}
where $\zeta>0$, to guarantee convergence of the system trajectories to the origin under $d$ satisfying Assumption \ref{assum: d bound}. We assume that the set $S$ is an $(n-1)$-unit sphere, so that we can use the results from the previous section to compute a viability domain for it, and that $0\in \textnormal{int}(S)$, so that the convergence requirement is feasible. The linear constraints on the control input, and the system model being control affine, helps us formulate a convex optimization problem that can be efficiently solved for real-time control synthesis \cite{ames2017control}. We propose the following Quadratic Program (QP) to solve Problem \ref{Problem 1}. 
Define $z =(v_s, v_v, \eta, \zeta)\in \mathbb R^{m+2}$ and for a given $x\in \mathbb R^n$, consider the following {QP}:\vspace{-10pt}

\small{
\begin{subequations}\label{QP gen}
\begin{align}
\hspace{10pt}\min_{z} \quad \frac{1}{2}|z|^2 + q\zeta\\
    \textrm{s.t.} \;\quad \; \; A_uv_{na}  \leq &  \; b_u, \label{C1 cont const}\\
    L_fB(x) + L_{g_s}B(x)v_s \leq & -\eta ~(B(x)+c)\nonumber \\
    & -\sup_{u_v\in \tilde{\mathcal U}_v}L_{g_v}B(x)u_v-l_B\delta,\label{C3 safe const}\\
     L_fV(x) + L_{g_s}V(x)v_s  + & L_{g_v}V(x)v_v\leq  -\zeta ~ V(x) -l_V\delta,\label{C4 conv const}
\end{align}
\end{subequations}}\normalsize
where $q>0$ is a constant, $l_B, l_v$ are the Lipschitz constants of the functions $B$ and $V$, respectively, and $c$ and $\tilde{\mathcal U}_v$ are the output of Algorithm \ref{algo: iter U c comb}. Here, $\eta$ and $\zeta$ are slack variables used for guaranteeing feasibility of the QP (see \cite[Lemma 6]{garg2019prescribedTAC}). The first constraint \eqref{C1 cont const} is the input constraints, the second constraint is the CBF condition from Lemma \ref{lemma: nec suff safety} for forward invariance of the set $S_c$ and the third constraint \eqref{C4 conv const} is CLF constraint for convergence of the system trajectories to the origin. Note that the secure input $v_s$ is used in both \eqref{C3 safe const} and \eqref{C4 conv const}, while the vulnerable input $v_v$ is only used in \eqref{C4 conv const}. 

Let the optimal solution of \eqref{QP gen} at a given point $x\in \mathbb R^n$ be denoted as $z^*(x) = (v_s^*(x), v_v^*(x), \eta^*(x),\zeta^*(x))$. In order to guarantee continuity of the solution $z^*$ with respect to $x$, we need to impose the strict complementary slackness condition on \eqref{QP gen} (see \cite{garg2019prescribedTAC}). In brief, if the $i-$the constraint of \eqref{QP gen}, with $i \in \{1, 2, 3\}$, is written as $G_i(x,z)\leq 0$, and the corresponding Lagrange multiplier is $\lambda_i\in \mathbb R_+$, then strict complementary slackness requires that $\lambda_i^*G(x,z^*)<0$, where $z^*, \lambda_i^*$ denote the optimal solution and the corresponding optimal Lagrange multiplier, respectively. We are now ready to state the following result.  


\begin{Theorem}
Given the functions $F, d, B, V$ and the attack model \eqref{eq: attack model}, suppose Assumptions \ref{assum: d bound}-\ref{assum: max dist tring nD} hold. Let $c$ and $\tilde{\mathcal U}_v$ be the output of the Algorithm \ref{algo: iter U c comb}. Assume that the strict complementary slackness holds for the QP \eqref{QP gen} for all $x\in S_c$. Then, the QP \eqref{QP gen} is feasible for all $x\in S_c$, and the control law defined as $k_s(x) = v_s^*(x)$ is continuous on $\textnormal{int}(S_c)$, and solves Problem \ref{Problem 1} for all $x(0)\in X_0\coloneqq  \textnormal{int}(S_c)$.
\end{Theorem}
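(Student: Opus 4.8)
The plan is to establish the three assertions—feasibility of \eqref{QP gen}, continuity of $k_s$, and satisfaction of the safety requirement of Problem \ref{Problem 1}—in turn, the main leverage being that in the control-affine model \eqref{eq: cont affine} the vulnerable input enters $L_FB$ additively and does not couple with $u_s$. For feasibility at a fixed $x\in S_c$, I would use the slack variables $\eta,\zeta$ together with the certificate produced by Algorithm \ref{algo: iter U c comb} via Theorem \ref{thm: sufficient sup H cond}. The box constraint \eqref{C1 cont const} confines $(v_v,v_s)$ to the nonempty compact set $\tilde{\mathcal U}_v\times\mathcal U_s$. Writing $m(x)\coloneqq\inf_{u_s\in\mathcal U_s}\bigl(L_fB(x)+L_{g_s}B(x)u_s\bigr)$, the separation $H(x,u_v)=m(x)+L_{g_v}B(x)u_v$ yields $\sup_{u_v\in\tilde{\mathcal U}_v}H(x,u_v)=m(x)+\sup_{u_v\in\tilde{\mathcal U}_v}L_{g_v}B(x)u_v\le -l_B\delta$ on $\partial S_c$. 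Hence choosing $v_s\in\mathcal U_s$ attaining $m(x)$ makes \eqref{C3 safe const} hold when $B(x)+c=0$; in the interior, where $B(x)+c<0$, enlarging $\eta$ relaxes \eqref{C3 safe const} for any admissible $v_s$. With $v_s$ so fixed, the remaining variables $v_v\in\tilde{\mathcal U}_v$ and $\zeta$ are free, and since $V(x)>0$ off the origin, decreasing $\zeta$ makes the right-hand side of \eqref{C4 conv const} arbitrarily large, so \eqref{C4 conv const} is met; this is the content of \cite[Lemma 6]{garg2019prescribedTAC}.

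For continuity of $k_s=v_s^*$ on $\textnormal{int}(S_c)$, I would use that the objective $\tfrac{1}{2}|z|^2+q\zeta$ is strictly convex, so the minimizer $z^*(x)$ is unique, and that the data $L_fB,L_{g_s}B,L_fV,\dots$ depend continuously on $x$. Under the assumed strict complementary slackness the active set is locally constant, so the KKT conditions reduce to a linear system whose solution, by the implicit function theorem as in \cite{garg2019prescribedTAC}, varies continuously with $x$; in particular $v_s^*(x)$ is continuous.

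For safety, I would verify that the closed loop meets the robust barrier condition \eqref{eq: safety cond} against every admissible attack and then invoke Lemma \ref{lemma: nec suff safety}. Evaluating \eqref{C3 safe const} at $x\in\partial S_c$ (so the $\eta$-term vanishes) and bounding $L_{g_v}B(x)u_v\le\sup_{u_v'\in\tilde{\mathcal U}_v}L_{g_v}B(x)u_v'$ for any $u_v\in\tilde{\mathcal U}_v$, I obtain $L_FB\bigl(x,(u_v,k_s(x))\bigr)=L_fB(x)+L_{g_s}B(x)k_s(x)+L_{g_v}B(x)u_v\le -l_B\delta$. Since $\bigl|\tfrac{\partial B}{\partial x}(x)d(t,x)\bigr|\le l_B|d(t,x)|\le l_B\delta$ by Assumption \ref{assum: d bound}, the total derivative of $B$ along the closed loop satisfies $\dot B\le 0$ whenever $B(x)=-c$, for every attack $u_v\in\tilde{\mathcal U}_v$; Lemma \ref{lemma: nec suff safety} then renders $S_c$ forward invariant, so any trajectory from $X_0=\textnormal{int}(S_c)$ stays in $S_c\subset S$, solving Problem \ref{Problem 1}.

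I expect feasibility to be the main obstacle, and within it the crucial point is turning the min--max certificate $\sup_{u_v}\inf_{u_s}H\le -l_B\delta$ of Theorem \ref{thm: sufficient sup H cond} into a single feedback $v_s^*(x)$, independent of $u_v$, that defeats every admissible attack. This succeeds only because the control-affine structure lets the worst-case term $\sup_{u_v}L_{g_v}B(x)u_v$ be extracted additively and absorbed into the tightened constraint \eqref{C3 safe const}; absent this decoupling, a min--max gap could obstruct the construction. A secondary subtlety is the degeneracy of \eqref{C4 conv const} at the origin, where $V$ and its Lie derivatives vanish, which affects only the convergence add-on rather than the safety claim of Problem \ref{Problem 1}.
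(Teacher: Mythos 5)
Your proposal is correct and follows the same skeleton as the paper's proof (feasibility $\rightarrow$ continuity via strict complementary slackness $\rightarrow$ forward invariance via Lemma \ref{lemma: nec suff safety}), but it is genuinely more self-contained: where the paper discharges each step by citation, you supply the actual arguments. The paper's proof is three references stitched together: $S_c$ is a viability domain by Theorem \ref{thm: sufficient sup H cond}, feasibility of \eqref{QP gen} follows from \cite[Lemma 6]{garg2019prescribedTAC}, continuity of $z^*$ from \cite[Theorem 1]{garg2019prescribedTAC}, and completeness/uniqueness of closed-loop solutions from \cite[Lemma 7]{garg2019prescribedTAC}, after which Lemma \ref{lemma: nec suff safety} closes the argument. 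You instead make explicit the step the paper leaves implicit inside the citation: that the min--max certificate $\sup_{u_v}\inf_{u_s}H\le -l_B\delta$ actually yields a feasible point of \eqref{C3 safe const}, precisely because in the control-affine model \eqref{eq: cont affine} the attack enters $L_FB$ additively, so $\sup_{u_v\in\tilde{\mathcal U}_v}H(x,u_v)=\inf_{u_s\in\mathcal U_s}\bigl(L_fB(x)+L_{g_s}B(x)u_s\bigr)+\sup_{u_v\in\tilde{\mathcal U}_v}L_{g_v}B(x)u_v$ and the worst-case attack term can be absorbed into the tightened constraint. This decoupling observation is the mathematical heart of why a single feedback $v_s^*(x)$, independent of $u_v$, defeats every admissible attack, and it is nowhere spelled out in the paper. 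Likewise, your direct verification that the closed loop satisfies the robust barrier inequality on $\partial S_c$ for every $u_v\in\tilde{\mathcal U}_v$ is a welcome expansion of the paper's one-line appeal to ``feasibility implies the conditions of Lemma \ref{lemma: nec suff safety}.'' Your flag of the degeneracy at the origin is also a sharper reading than the paper's: since $V$ is positive definite and smooth, all Lie derivatives of $V$ vanish at $x=0$ while the right-hand side of \eqref{C4 conv const} is $-l_V\delta<0$, so the QP is in fact infeasible there when $\delta>0$; the paper's citation-based proof silently glosses over this.

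One piece of the paper's proof is missing from yours: existence, uniqueness, and forward completeness of the closed-loop solutions. Continuity of $k_s$ gives existence (Peano) but not uniqueness, and Lemma \ref{lemma: nec suff safety} is stated under the paper's standing assumption that solutions are unique and defined for all $t\ge 0$; the paper handles this by invoking \cite[Lemma 7]{garg2019prescribedTAC} together with compactness of $S_c$. You should either add this step or note explicitly that you are leaning on the paper's blanket uniqueness assumption. This is a minor omission, not a flaw in the main argument.
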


\begin{proof}
Per Theorem \ref{thm: sufficient sup H cond}, the set $S_c$ is a viability domain for the system \eqref{eq: cont affine} under Assumption \ref{assum: LC F del B}. Thus, feasibility of the QP \eqref{QP gen} follows from \cite[Lemma 6]{garg2019prescribedTAC}. Note also that with $V$ being twice continuously differentiable and under Assumption \ref{assum: LC F del B}, the Lie derivatives of the functions $V$ and $B$ along $f, g_s$, and $g_v$ are continuous. Thus, per \cite[Theorem 1]{garg2019prescribedTAC}, the solution $z^*$ of the QP \eqref{QP gen} is continuous on $\textnormal{int}(S_c)$. Finally, since the set $S_c$ is compact, it follows from \cite[Lemma 7]{garg2019prescribedTAC} that the closed-loop trajectories are uniquely defined for all $t\geq 0$. Uniqueness of the closed-loop trajectories, Assumption \ref{assum: d bound} and feasibility of the QP \eqref{QP gen} for all $x\in S_c$ implies that all the conditions of Lemma \ref{lemma: nec suff safety} are satisfied and it follows that the set $S_c$ is forward invariant for the system \eqref{eq: cont affine}. 
\end{proof}

{
\begin{Remark}
In this work, only the control input $u_s$ is used to achieve safety since it is unknown when the vulnerable input $u_v$ comes under an attack. This conservative assumption can be relaxed by utilizing an attack-detection mechanism, which can trigger a switching mechanism from a \textnormal{nominal} control design, assuming no attacks, to the proposed method under an attack. We leave this detection-based switching mechanism as part of our future work.
\end{Remark}
}

\section{Numerical Experiments}
To showcase the effectiveness of the proposed method, we present an academic example with the system given as 
\begin{align}
    \dot x = f(x) + Ax + Bu + d(t,x),
\end{align}
where $A\in \mathbb R^{3\times 3}$ and $B\in \mathbb R^{3\times 2}$. The input constraint sets are $\mathcal U_1 = \{u_1\in \mathbb R\; |\; |u_1|\leq u_{M1}\}$ and $\mathcal U_2 = \{u_2\in \mathbb R\; |\; |u_2|\leq u_{M2}\}$ for some $u_{M1}, u_{M2}>0$. The safe set is $S = \{x\in \mathbb R^3\; |\; |x|^2-1\leq 0\}$ corresponding to the function $h(x) = |x|^2-1$, i.e., the safe set is the unit sphere. We use randomly generated matrices $A$ and $B$ such that the pairs $(A,B_1)$ and $(A, B_2)$ are controllable, where $B_1$ and $B_2$ are the first and the second columns of the matrix $B$, respectively. The matrices $(A,B)$ and the function $f$ are

{\footnotesize
\begin{align*}
    A \!=\!\begin{bmatrix}0.61  & \hspace{-12pt} 0.37 & \hspace{-12pt} 2.69\\
    -0.06& \hspace{-12pt} -1.02 & \hspace{-12pt} -0.88\\
    1.33& \hspace{-12pt} -2.71& \hspace{-12pt} 0.91\end{bmatrix}\!, B\!=\! \begin{bmatrix}-0.24 & \hspace{-10pt} 0.04\\
    0.32 & \hspace{-10pt} -0.01\\
    -1.12& \hspace{-10pt}-0.07\end{bmatrix}\!, f(x) \!=\! 0.01\!\begin{bmatrix}x_1^3 + x_2^2x_3\\ x_2^3 + x_3^2x_1\\x_3^3 + x_1^2x_2 \end{bmatrix}.
\end{align*}}\normalsize

\begin{figure}[b]
\vspace{-10pt}
	\centering
	\includegraphics[width=\columnwidth,clip]{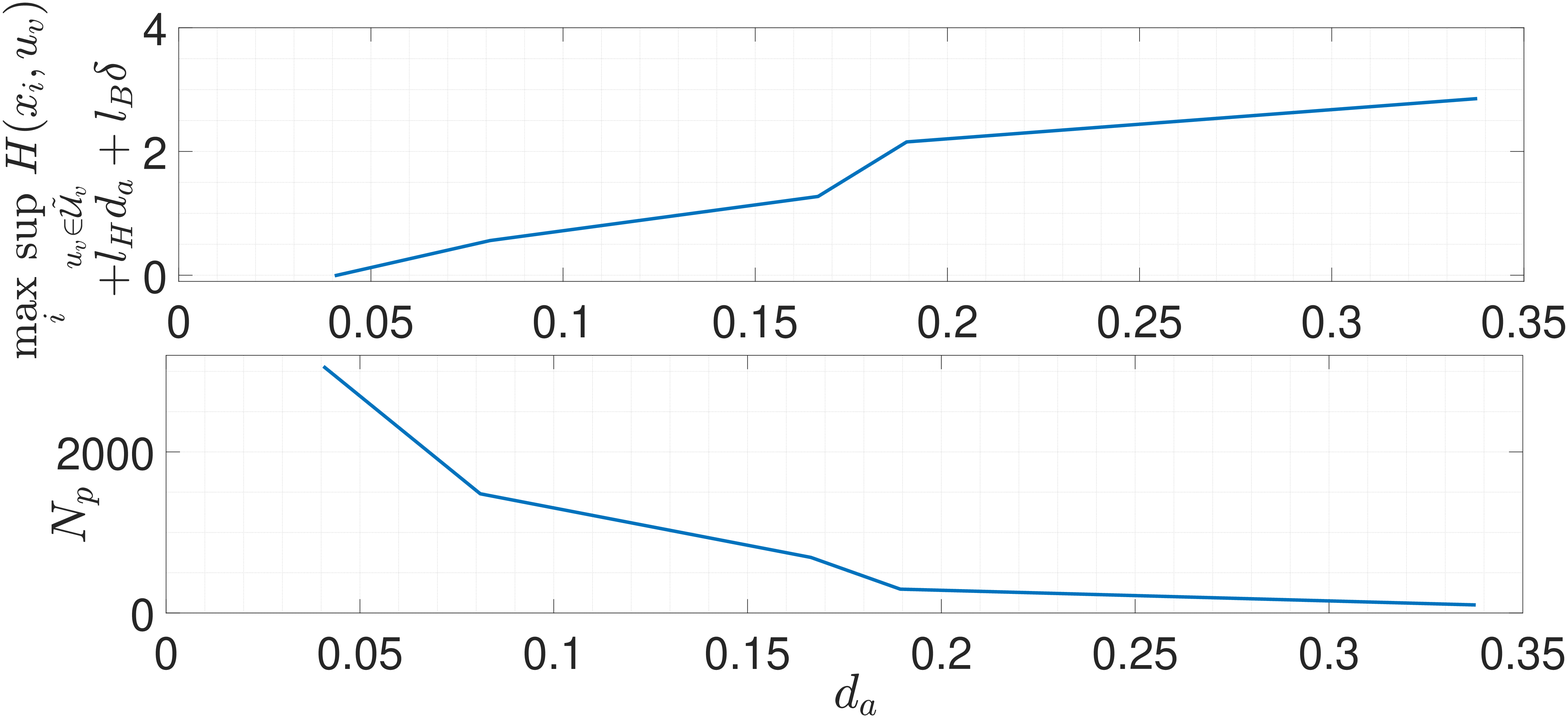}
	\caption{The value of $\max_i\sup_{u_v\in \tilde{\mathcal U}_v}H(\bar x,u_v) +l_Hd_a +l_B\delta+l_Hd_a$ and the number of sampling points for different values of $d_a$.}
	\label{fig:sup H}
\end{figure}
We use MATLAB code from \cite{anton2021SamplingCode} to generate a uniform sampling on the boundary of the unit sphere.
Figure \ref{fig:sup H} shows the maximum value of $\sup_{u_v\in \tilde{\mathcal U}_v}H(\bar x,u_v)+l_Hd_a +l_B\delta$ over the sampling points for different values of $d_a$. It is observed that condition \eqref{eq: sup H cond 3D} is satisfied when $d_a =0.0406$, and the corresponding number of sampling points is $N_p = 3062$. 
Without loss of generality, we assume that $u_2$ is vulnerable.

\begin{figure}[t]
	\centering
	\includegraphics[width=0.95\columnwidth,clip]{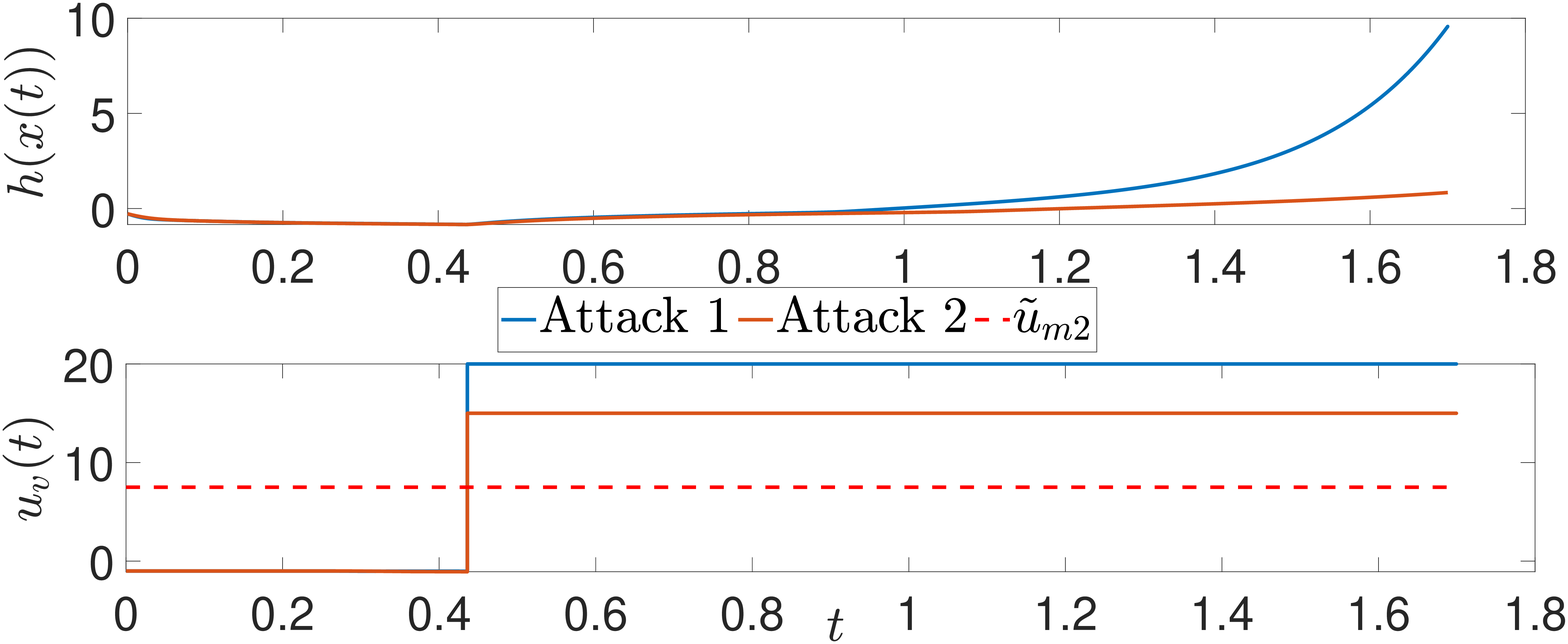}
	\caption{The vulnerable input $u_v$ and the function $h$ under attacks 1 and 2.}
\vspace{-10pt}
	\label{fig:ex1 uh func 10 5}
\end{figure}
\begin{figure}[t]
	\centering
	\includegraphics[width=0.95\columnwidth,clip]{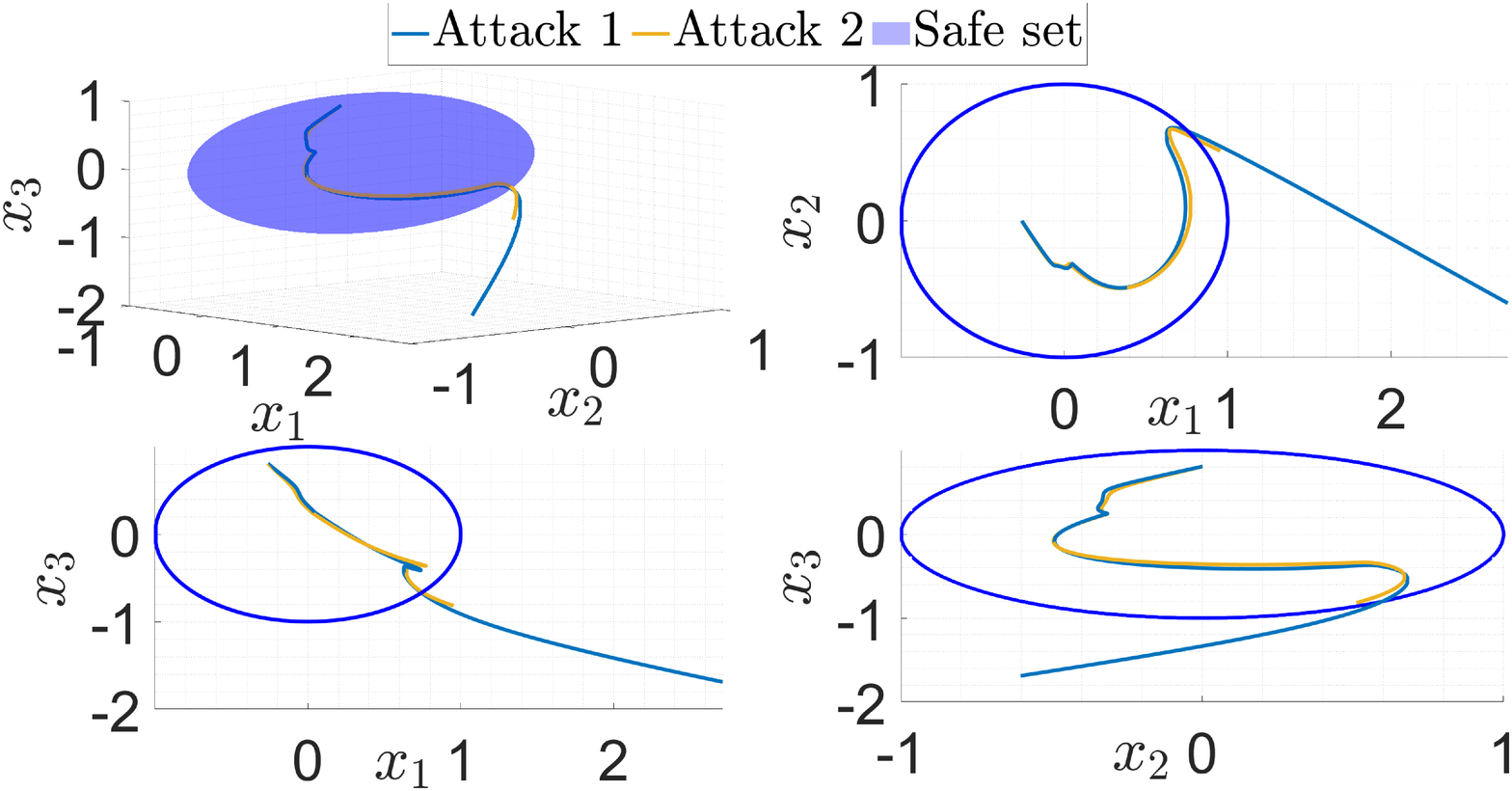}
	\caption{The closed-loop paths traced by the system under attacks 1 and 2. 
	}
\vspace{-20pt}
	\label{fig:ex1 traj 10 5}
\end{figure}

\noindent We use Algorithm \ref{algo: iter U c comb} to compute the set $\tilde{\mathcal U}_i$ and a value of $c$ such that \eqref{eq: sup H cond 3D} holds for all the sampling points. With $u_{M1} = 20$ and $u_{M2} = 20$ (defining the sets $\mathcal U_s, \mathcal U_v$), Algorithm \ref{algo: iter U c comb} gives $c = 0$ for the viability domain $\{x\; |\; h(x)\leq c\}$ and $\tilde u_{M2} = 7.5$ (defining the set $\tilde{\mathcal U}_v$) as the feasible bound on the attack signal $u_2$. The attack happens at a randomly chosen $\tau = 0.436$ and $\delta = 0.1$ in Assumption \ref{assum: d bound}. 

First, we illustrate that the system violates safety when the attack signal $u_2$ does not satisfy the bounds computed by Algorithm \ref{algo: iter U c comb}. Figure \ref{fig:ex1 uh func 10 5} shows the vulnerable input $u_v$ for the initial two attack scenarios (Attack 1 and 2) where $\bar u_{M2} = 20$ and $\bar u_{M2} = 15$, i.e., the set $\tilde{\mathcal U}_v$ is larger than the one computed using the proposed algorithm. Figure \ref{fig:ex1 uh func 10 5} also plots the evolution of the barrier function $h$ with time for the two cases. It can be observed that the function $h$ corresponding to this attack takes positive values, and thus, the safety property for the system is violated. Figure \ref{fig:ex1 traj 10 5} plots the corresponding closed-loops paths for the two scenarios, and it can be seen that the system leaves the safe set, thus violating safety. 

In the rest of the attack scenarios (Attack 3-6), the bound $|u_v|\leq 7.5$ is imposed as computed by the proposed algorithm.  Figure \ref{fig:ex1 uh func rest} plots the different types of attack signals used in these scenarios, namely, saturated signals with $u_v = 7.5$ and $u_v = -7.5$, square wave and sinusoidal signal, both with amplitude $7.5$. The corresponding evolution of the barrier function $h$ illustrates that the system maintains safety in all four scenarios. Figure \ref{fig:ex1 traj rest} plots the closed-loops paths for these attack scenarios, and it can be seen that the system trajectories evolve in the safe set at all times, thus maintaining safety. 
Through this case study, we illustrate that if the system parameters are not chosen according to our proposed method, then there might exist attacks that can lead to violation of safety. On the other hand, when the system parameters are designed according to the proposed algorithm, no attack can violate safety, confirming that the system is secure by design. 

\begin{figure}[t]
	\centering
	\includegraphics[width=0.95\columnwidth,clip]{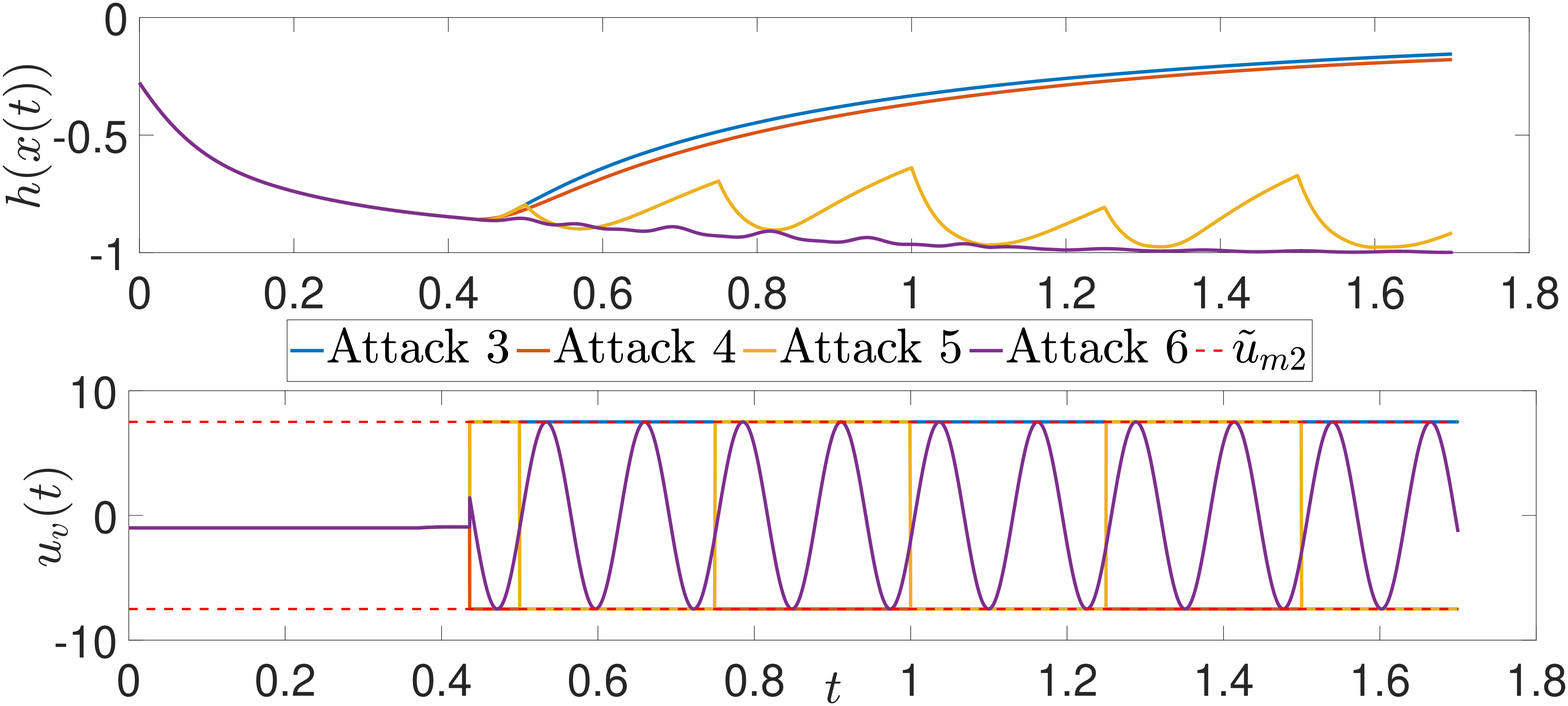}
	\caption{The vulnerable input $u_v$ and the function $h$ under attacks 3-6.}
\vspace{-10pt}
	\label{fig:ex1 uh func rest}
\end{figure}
\begin{figure}[t]
	\centering
	\includegraphics[width=0.95\columnwidth,clip]{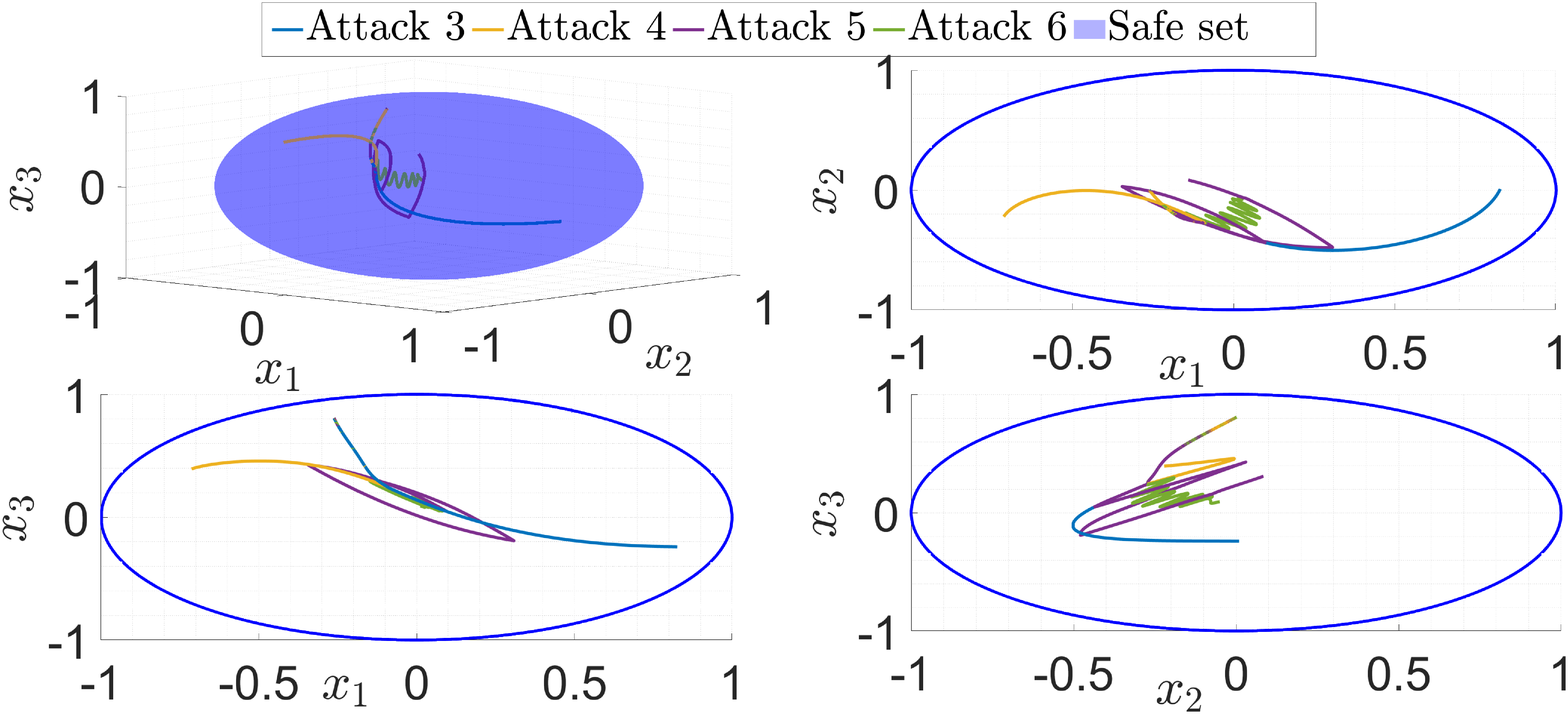}
	\caption{The closed-loop paths traced by the system under attacks 3-6. 
	}
\vspace{-20pt}
	\label{fig:ex1 traj rest}
\end{figure}

\section{Conclusion and Future Work}
{In this paper, we study the problem of computing a viability domain and input constraint set so that the safety of a system can be guaranteed under attacks on the system inputs. In contrast to prior work on the computation of viability domain whose applicability is limited to linear or polynomial dynamics or whose computational complexity grows exponentially with system dimension, our method is computationally efficient and applies to a general class of nonlinear systems. We showed that when the system parameters are chosen using our sampling-based iterative algorithm, the resulting system is resilient to arbitrary attacks, and thus, is secure by design.}

Our approach can be used to design bounds (that can either be implemented physically or in a tamper-proof reference monitor) that will prevent attackers from driving control systems to unsafe states. It is efficient (sampling-based viability computation) and general (applicable to non-linear systems).
By limiting the range of actuation and the initial set, we are limiting the responsiveness of control action, and in general, systems with our defense might converge slower to the desired set point or trajectory. One way to mitigate this is to use attack-detection mechanisms and switching strategy so that more efficient controllers can be used when the system is not under an attack. 

\bibliographystyle{IEEEtran}
\bibliography{myreferences}

\end{document}